\def \polylog{\operatorname{polylog}}
\newcolumntype{P}[1]{>{\centering\arraybackslash}p{#1}}
\newcolumntype{M}[1]{>{\centering\arraybackslash}m{#1}}
\theoremstyle{definition}
\newtheorem*{def*}{Definition}
\newtheorem{definition}{Definition}
\newtheorem{theorem}{Theorem}[section]
\theoremstyle{definition}
\newtheorem{lemma}{Lemma}
\theoremstyle{definition}
\newtheorem*{prf*}{Proof}
\theoremstyle{definition}
\newtheorem*{prfthm*}{Proof of Theorem}
\newtheorem{remark}{Remark}
\newcommand{\I}{\mathcal{I}}
\newcommand{\R}{\mathcal{R}}
\newcommand{\eps}{\epsilon}
\newcommand{\anis}[1]{{\color{green}\underline{\textsf{A:}}} {\color{blue} \emph{#1}}}
\newcommand{\manish}[1]{{\color{blue}\underline{\textsf{M:}}} {\color{red} \emph{#1}}}
\newcommand{\shortOnly}[1]{\ifthenelse{\boolean{short}}{#1}{}}
\newcommand{\onlyShort}[1]{\ifthenelse{\boolean{short}}{#1}{}}
\newcommand{\longOnly}[1]{\ifthenelse{\boolean{short}}{}{#1}}
\newcommand{\onlyLong}[1]{\ifthenelse{\boolean{short}}{}{#1}}
\newcommand{\shortLong}[2]{\ifthenelse{\boolean{short}}{#2}{#1}}
\newcommand{\longShort}[2]{\ifthenelse{\boolean{short}}{#2}{#1}} 
\begin{document}
\title{Sublinear Message Bounds of Authenticated Implicit Byzantine Agreement\thanks{The work of A. R. Molla was supported in part by ISI DCSW Project, file number E5413.}}

\author{Manish Kumar \thanks{Indian Statistical Institute, Kolkata 700108, India.\hbox{E-mail}:~{\tt manishsky27@gmail.com}.}
\and Anisur Rahaman Molla \thanks{Indian Statistical Institute, Kolkata 700108, India.  \hbox{E-mail}:~{\tt anisurpm@gmail.com}.}}
\date{}

\maketitle \thispagestyle{empty}

\maketitle
\begin{abstract}
This paper studies the message complexity of authenticated Byzantine agreement (BA) in synchronous, fully-connected distributed networks under an honest majority. We focus on the so-called {\em implicit} Byzantine agreement problem where each node starts with an input value and at the end a non-empty subset of the honest nodes should agree on a common input value by satisfying the BA properties (i.e., there can be undecided nodes)\footnote{Implicit BA is a generalization of the classical BA problem. Throughout, we write Byzantine agreement or BA to mean implicit Byzantine agreement.}. We show that a sublinear (in $n$, number of nodes) message complexity BA protocol under honest majority is possible in the standard PKI model when the nodes have access to an unbiased global coin and hash function. In particular, we present a randomized Byzantine agreement algorithm which, with high probability achieves implicit agreement, uses $\tilde{O}(\sqrt{n})$  messages, and runs in $\tilde{O}(1)$ rounds while tolerating  $(1/2 - \epsilon)n$  Byzantine nodes for any fixed $\epsilon > 0$, the notation $\Tilde{O}$ hides a $O(\polylog{n})$ factor\footnote{We use the abbreviated phrase ``w.h.p." or ``with high probability," signifying a probability of at least $1-1/n^{\epsilon}$ for a constant $\epsilon$, where $n$ corresponds to the number of nodes present in the network.}. The algorithm requires  standard cryptographic setup PKI and hash function with a static Byzantine adversary. The algorithm works in the CONGEST model and each node does not need to know the identity of its neighbors, i.e., works in the $KT_0$ model. The message complexity (and also the time complexity) of our algorithm is optimal up to a $\polylog n$ factor, as we show a $\Omega(\sqrt{n})$ lower bound on the message complexity. \onlyLong{We also perform experimental evaluation and highlight the effectiveness and efficiency of our algorithm. The experimental results outperform the theoretical guarantees.
We further extend the result to Byzantine subset agreement, where a non-empty subset of nodes should agree on a common value. We analyze several relevant results which follow from the construction of the main result.}      

To the best of our knowledge, this is the first sublinear message complexity result of Byzantine agreement. A quadratic message lower bound is known  for any deterministic BA protocol (due to Dolev-Reischuk [JACM 1985]). The existing randomized BA protocols have at least quadratic message complexity in the honest majority setting. Our result shows the power of a global coin in achieving significant improvement over the existing results. It can be viewed as a step towards understanding the message complexity of randomized Byzantine agreement in distributed networks with PKI. 
\end{abstract}

\noindent {\bf Keywords:} Distributed Algorithm, Randomized Algorithm, Byzantine Agreement, Message Complexity, Global Coin, Cryptographic Assumptions.
\section{Introduction}\label{sec:introduction}
Byzantine agreement is a fundamental and long studied problem in distributed networks \cite{LSP82,AW,LynchBook}. In this problem, all the nodes are initiated with an input value. The Byzantine agreement problem is required to satisfy: (i) the honest nodes must decide on the same input value; and (ii) if all the honest nodes receive the same input value, then they must decide on that value\footnote{Throughout, we interchangeably use the term `non-Byzantine' and `honest', and similarly, `Byzantine' and `faulty'.}. This should be done in the presence of a constant fraction of Byzantine nodes that can arbitrarily deviate from the protocol executed by the honest nodes. Byzantine agreement provides a critical building block for creating attack-resistant distributed systems. Its importance can be seen from widespread and continued application in many domains such as wireless networks~\cite{kong2004anonymous,li2009privacy,weber2010internet,sicari2015security}, sensor networks \cite{SP04}, grid computing \cite{AK02}, peer-to-peer networks \cite{REGWZK03} and cloud computing \cite{Wright09}, cryptocurrencies~\cite{bitcoin,ethereum,AMN0S17,KJGKGF16,Micali16}, secure multi-party computation \cite{BGW88} etc.  However, despite huge research, we lack efficient practical solutions to Byzantine agreement for large networks. A main drawback for this is the {\em large message complexity} of currently known protocols, as mentioned by many systems papers \cite{AF03,ADK06,castro2002practical,MR97,YHET05}. The best known Byzantine protocols have (at least) quadratic message complexity \cite{vinod1,vinod2,feldman,kowalski}, even in the authenticated settings \cite{DS83,MR20}. In a distributed network, nodes communicate with their neighbors by passing messages. Therefore, communication cost plays an important role to analyze the performance of the algorithms, as also mentioned in many papers~\cite{abraham2019communication,halpern,gray,MR20}.

King and Saia \cite{KS11} presented the first Byzantine agreement algorithm that {\em breaks} the quadratic message barrier in synchronous, complete networks. The message complexity of their algorithm is $\tilde{O}(n^{1.5})$. Later, Braud-Santoni et al.~\cite{SGH13} improved this to $\tilde{O}(n)$ message complexity. Both works require the nodes to know the IDs of the other nodes a priori. This model  is known as
$KT_1$ model (known till 1)~\cite{Pelege2000}. Another challenging model is $KT_0$, where nodes do not know their neighbors a priori \cite{Pelege2000}. Note that in $KT_0$ model, nodes can know their neighbors easily by communicating to all the neighbors, perhaps in a single round, but that will incur $\Omega(n^2)$ messages.  The $KT_0$ model is more appropriate to the modern distributed networks which are permissionless, i.e., nodes can enter and leave the network at will.

In this paper, our main focus is to study the message complexity of the Byzantine agreement problem in the $KT_0$ model under the assumption of cryptographic setup and a global coin (as defined in \cite{Rabin83}). In fact, we study the implicit version of the Byzantine agreement, where not all the honest nodes need to be decided; only a non-empty subset of the honest node must decide on an input value. Our main result is a randomized algorithm to solve implicit Byzantine agreement using sublinear messages (only $\tilde{O}\sqrt{n}$) while tolerating $f \le (1/2-\eps)n$ Byzantine nodes, where $n$ is the number of nodes in the network, $f$ is the number of Byzantine nodes and $\eps>0$ is a fixed constant. The implicit algorithm can be easily extended to the explicit Byzantine agreement (where all the honest nodes must decide) using $O(n\log n)$ messages only. The algorithm is simple and easily implementable, which is highly desired for practical purposes. While the assumptions on the Public Key Infrastructure (PKI) set up with keyed hash function and the global coin together make the model a little weaker, they are realistic and implementable \footnote{Throughout, we interchangeably use the term `hash function' and `keyed hash function'.}. Similar assumptions were made earlier in the literature, e.g., in Algorand, Gilad et al. \cite{GHMVZ17} uses “seed” and PKI setup, where “seed” is essentially the shared random bits. In their approach, they formed a set of candidate nodes and reached an agreement with the help of the sortition algorithm (see Section 5 of \cite{GHMVZ17}) using proof-of-stake (PoS). They further used verifiable random functions (VRFs) \cite{MRV99} for the verification of the candidate nodes which return hash and proof. In our work, we do not require the assumption of VRFs. The message complexity of Algorand is $\tilde{O}(n)$, albeit for the explicit agreement. 

Without PKI setup, hash function and global coin assumptions, we do not know if a sublinear (or even a linear) message complexity Byzantine agreement algorithm is possible or not. So far, the best results have quadratic message bound in the $KT_0$ model and sub-quadratic in the $KT_1$ model. 

Our result introduces the first {\em sublinear message complexity} Byzantine agreement algorithm and at the same time tolerates optimal resilience, i.e., $f \le (1/2-\eps)n$. \onlyLong{We further extend the implicit agreement to a natural generalized problem, called {\em subset agreement} problem. The subset agreement problem can be useful in real applications. For example, in a large scale distributed network, it may require that a non-empty subset of the nodes (unknown to each other) want to agree on a common value. Since, typically, the size of the subset is much smaller than the network size, the cost of the agreement would be less than the explicit agreement. Thus, subset agreement may work as a subroutine in many applications.} We also argue a lower bound on the message complexity of the problem. The lower bound shows that the message complexity of our algorithm is optimal up to a $\polylog n$ factor. \onlyLong{Finally, we implement our algorithm to evaluate its actual performance.} Our results can be viewed as a step towards understanding the message complexity of randomized BA in distributed networks under the assumptions of PKI, hash function and global coin.

\medskip
\noindent \textbf{Paper Organization:} The rest of the paper is organized as follows. In the rest of this section, we state our result and introduce the model and definition. Section \ref{sec:relatedwork} is a related work, which introduces the seminal works done in the same direction. Section \ref{sec:agreement} presents the main implicit Byzantine agreement algorithm\onlyLong{and other relevant results}. \onlyLong{Section~\ref{sec:subset agreement} presents the Byzantine subset agreement.} In Section ~\ref{sec: lower_bound}, we present the lower bound on the message complexity to support the optimality of our algorithm. \onlyLong{Section~\ref{sec: experiment} shows the experimental evaluation of the implicit Byzantine agreement algorithm.} Finally, we conclude with some open problems in Section~\ref{sec:conclusion}.

\subsection{Our Results}
We show the following main results.  
\begin{theorem}[Implicit Agreement]\label{main_theorem}
Consider a synchronous, fully-connected, anonymous network of $n$ nodes and CONGEST communication model. Assuming a public-key infrastructure with the keyed hash function, there exists a randomized algorithm which, with the help of global coin, solves implicit Byzantine agreement with high probability in $O(\log^2 n)$ rounds and uses $O(n^{0.5} \log^{3.5} n)$ messages while tolerating  $f \le (1/2-\epsilon)n$ Byzantine nodes under non-adaptive adversary, where $\eps$ is any fixed positive constant.  
\end{theorem}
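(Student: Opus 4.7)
The plan is a committee-based approach that uses the unbiased global coin and the PKI to bypass the $\Omega(n^{2})$ deterministic barrier. Using the shared coin seed and the keyed hash function on PKI-registered identities, every node publicly carves out a tiny committee $C$ (by thresholding the hash value), together with a moderately larger ``decider'' set $D$, designed so that $\mathbb{E}[|C|]=\Theta(\polylog n)$ and $\mathbb{E}[|D|]=\Theta(\sqrt{n}\,\polylog n)$. Because the adversary is non-adaptive and the seed is unbiased, both samples behave like uniform random subsets of $V$, and a Chernoff bound gives $|C|$ and $|D|$ close to their expectations and a Byzantine fraction of at most $1/2-\epsilon/2$ inside each, w.h.p.; in particular $C$ has an honest majority and both $C$ and $D$ contain honest members. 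Since PKI makes all identities public, every node can locally reconstruct $C$ and $D$.

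Next, the members of $C$ execute an authenticated Byzantine agreement (e.g.\ Dolev--Strong style signed echoes) on their own inputs. In $O(|C|)=O(\polylog n)$ rounds and $O(|C|^{2})=O(\polylog^{2} n)$ signed messages, the honest majority of $C$ produces a common value $v$; PKI signatures prevent Byzantine committee members from equivocating undetected. Validity of the outer BA is inherited: if every honest node of $V$ starts with a common input $v_{0}$, so does every honest member of $C$, forcing the internal decision to be $v_{0}$.

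Then each committee member signs the agreed value and forwards it to every node of $D$. The total cost of this dissemination is $|C|\cdot|D|=\tilde{O}(\sqrt{n})$ messages, spread over $O(\polylog n)$ rounds, and a decider adopts a value only after collecting at least $\tau=(1/2+\epsilon/4)|C|$ valid committee signatures on it. Honest deciders collect $\geq (1/2+\epsilon/2)|C|$ honest signatures on $v$, so they accept $v$; PKI unforgeability together with $\tau$ exceeding the Byzantine signing budget inside $C$ rules out any competing value ever accumulating $\tau$ signatures. Hence all honest members of $D$, together with the honest members of $C$ itself, agree on $v$, furnishing the non-empty honest subset required by implicit BA.

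The main obstacle I anticipate is the simultaneous balancing of three constraints: (i) $|C|$ must be $\Omega(\log n/\epsilon^{2})$ so that the honest-majority Chernoff bound applies inside the committee; (ii) $\tau$ must strictly exceed the Byzantine signing budget inside $C$ while still being reachable by honest deciders from honest committee signatures alone; and (iii) the product $|C|\cdot|D|$ must stay within the $\tilde{O}(\sqrt{n})$ message budget. Invoking the non-adaptivity of the adversary and the unbiasedness of the global coin at two different places (committee sampling and decider sampling) also requires care: the analysis must prevent a fixed Byzantine set from profitably concentrating on either of the two random subsets that the coin produces, and must handle the subtlety that in the $KT_{0}$ model committee members identify one another only via the public PKI, so the delivery of messages to specific committee/decider identities cannot rely on any prior neighbour knowledge.
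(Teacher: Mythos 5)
There is a genuine gap, and it sits exactly at the point you flag as a ``subtlety'' at the end: in the anonymous/$KT_0$ model, a committee member knows the public keys of the other committee members but has no idea which of its $n-1$ ports leads to any of them. Consequently your central step --- ``the members of $C$ execute an authenticated Byzantine agreement \ldots{} in $O(|C|^{2})=O(\polylog^{2} n)$ signed messages'' --- cannot be executed as stated: there is no way to address a message to a specific committee member without either broadcasting to all $n-1$ neighbours ($\Omega(n)$ messages per node) or routing through intermediaries. The same objection applies to ``each committee member \ldots{} forwards it to every node of $D$.'' This is not a detail to be patched later; it is the reason the $\sqrt{n}$ appears at all. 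Indeed, if the internal agreement in $C$ really cost only $\polylog n$ messages, you could drop $D$ entirely (implicit agreement only needs the committee to decide) and obtain a $\polylog n$-message protocol, contradicting the paper's own $\Omega(\sqrt{n})$ lower bound, which rests precisely on the fact that in $KT_0$ two designated nodes cannot find each other with $o(\sqrt{n})$ messages.

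The paper's resolution is the piece your plan is missing: each committee member independently samples $2\sqrt{n\log n}$ random ``referee'' nodes, and a birthday-paradox argument shows that any two committee members share about $4\log n$ sampled referees in expectation, at least one of which is honest w.h.p.\ (Lemmas~\ref{lemma:atmostfaulty} and~\ref{lemma:common-referee}). All committee-to-committee traffic in the Dolev--Strong phase is then relayed through these referees, with signatures preventing tampering; this relay traffic, $O(\log n)$ committee members times $O(\sqrt{n\log n})$ referees times $O(\log n)$ iterations, is what produces the $O(n^{0.5}\log^{3.5} n)$ message bound. Your set $D$ superficially resembles the referee set in size, but it plays the wrong role (a passive audience for the decision rather than the communication substrate for reaching it), and it enters the protocol only after the step that cannot be carried out. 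Your committee-selection and honest-majority Chernoff argument do match the paper's Lemma~\ref{lemma:atmostfaulty}, but the proposal as written does not prove the theorem.
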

\begin{theorem}[Explicit Agreement]
Consider a synchronous, fully-connected network of $n$ nodes and CONGEST communication model. Assuming a public-key infrastructure with the keyed hash function, there exists a randomized algorithm which, with the help of global coin, solves Byzantine agreement with high probability in $O(\log^2 n)$ rounds and uses $O(n \log n)$ messages while tolerating  $f \le (1/2-\epsilon)n$ Byzantine nodes under non-adaptive adversary, where $\eps$ is any fixed positive constant.   
\end{theorem}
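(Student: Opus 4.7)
The plan is to reduce the explicit agreement problem to Theorem~\ref{main_theorem} (implicit agreement) followed by a lightweight certificate-dissemination phase. First I would invoke Theorem~\ref{main_theorem} to obtain, with high probability, a non-empty subset $S$ of honest nodes that have decided on a common value $v$, in $O(\log^2 n)$ rounds using $O(n^{0.5}\log^{3.5} n)$ messages. Along with $v$, each node in $S$ would retain an unforgeable certificate $\sigma$ assembled from the signed messages of the deciding committee during the last step of the implicit protocol; by PKI and the keyed hash, no Byzantine coalition can manufacture a $\sigma'$ attesting to a different value $v'$.

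Next I would append $O(\log n)$ rounds of randomized push gossip. In each round, every honest node currently holding a valid $(v,\sigma)$ forwards the pair to a small (constant or $O(\log n)$) number of random node identifiers drawn from the PKI directory, and any recipient that verifies the certificate adopts $v$ and joins the informed set in the next round. Because $\sigma$ is unforgeable, a Byzantine node can at worst withhold or re-transmit the correct pair; it cannot cause an honest node to output a value other than $v$, so the agreement and validity properties of explicit BA are inherited directly from Theorem~\ref{main_theorem} together with unforgeability.

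For the complexity analysis, the round count is dominated by the implicit phase, giving $O(\log^2 n)$ rounds total. For the message count, a standard push-gossip argument seeded by at least one honest informed node shows that the informed set grows geometrically and reaches all $n$ nodes within $O(\log n)$ rounds w.h.p., at a cumulative cost of $O(n\log n)$ messages; adding the $O(n^{0.5}\log^{3.5} n)$ messages of the implicit phase leaves the total at $O(n\log n)$. Tolerance of $f\le (1/2-\epsilon)n$ is preserved since the dissemination phase does not require any additional honest-majority condition beyond that already used to obtain $\sigma$.

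The step I expect to be the main obstacle is crystallizing a compact, verifier-friendly certificate $\sigma$ out of the implicit protocol. One has to argue that the transcript evidence produced when the committee agrees in Theorem~\ref{main_theorem} can be packaged into an object of $\polylog(n)$ bits so that CONGEST is respected throughout gossip, and that under a static adversary with PKI no concurrent certificate for a second value can ever accumulate enough signatures to pass an honest verifier's check. Once this is in place, the remainder of the argument is essentially a textbook rumor-spreading analysis grafted onto the guarantees of Theorem~\ref{main_theorem}.
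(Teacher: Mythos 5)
Your route is genuinely different from the paper's, and more machinery than is needed. The paper's own proof is a single extra round with no gossip and no transferable certificate: after the implicit phase, each of the $c\log n$ committee nodes signs the agreed value and sends it to \emph{all} $n$ nodes (which is possible in $KT_0$ since they simply use every port), costing $O(n\log n)$ messages; each recipient then takes the \emph{majority} of the $c\log n$ individually signed values it receives, which is correct because Lemma~\ref{lemma:atmostfaulty} guarantees an honest majority in $\mathcal{C}$. Because every node hears from every committee member directly, no single message ever has to be self-certifying, so each message is one signature plus the value, i.e., $O(\kappa+\log n)$ bits, and fits in CONGEST.

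The step you yourself flag as the main obstacle is, in fact, a real gap in your version, and it is precisely what the paper's design avoids. In a push-gossip dissemination a recipient may hear the value from only a single (possibly Byzantine-relayed) source, so the forwarded object must be independently convincing; a single committee signature is not, since a Byzantine committee member can sign and gossip a different value $v'$. A convincing certificate therefore needs more than $(c\log n)/2$ distinct committee signatures, i.e., $\Omega(\kappa\log n)=\Omega(\log^2 n)$ bits, which costs $\Theta(\log n)$ CONGEST messages per forwarding and inflates your total to $O(n\log^2 n)$ messages --- missing the claimed $O(n\log n)$ bound --- unless you additionally assume an aggregate or threshold signature primitive, which the theorem statement does not grant. (Your phrase ``forwards to random node identifiers drawn from the PKI directory'' also needs rewording for $KT_0$: a node can only push on a uniformly random \emph{port}, which happens to suffice for gossip on a complete graph, but it cannot target a chosen identifier.) If you either add a multi-signature assumption or simply replace the gossip phase with the paper's committee-to-everyone broadcast plus local majority vote, the argument closes.
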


\onlyLong{
\begin{theorem}[Byzantine Subset Agreement]
Consider a complete $n$-node network $G (V, E)$ and a subset $S\subseteq V$ of size $k$. There is a randomized algorithm which, with the help of a global coin and PKI set up with keyed hash function, solves the Byzantine subset agreement over $S$ with high probability and finishes in $\Tilde{O}(k)$ rounds and uses $\min\{ \Tilde{O}(k\sqrt{n}), \Tilde{O}(n)\}$ messages. 
\end{theorem}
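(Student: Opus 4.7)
The plan is to use Theorems~1 and 2 as black boxes and perform a case analysis on $k$ versus $\sqrt{n}$. When $k \ge \sqrt{n}$, the bound $\min\{\tilde{O}(k\sqrt{n}), \tilde{O}(n)\}$ equals $\tilde{O}(n)$, so I would simply invoke the explicit Byzantine agreement algorithm of Theorem~2. That protocol already forces every honest node---hence every honest node in $S$---to decide on a common value, using $O(n\log n)$ messages and $O(\log^2 n) \subseteq \tilde{O}(k)$ rounds; subset agreement over $S$ is then immediate.

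The interesting regime is $k < \sqrt{n}$, where I would first run the implicit BA algorithm of Theorem~1 to produce a non-empty set $K$ of honest nodes holding a common PKI-signed value $v$. The key structural property I will rely on, which is already present in the construction underlying Theorem~1, is that $K$ is a \emph{publicly verifiable} committee of size $\Omega(\sqrt{n}/\polylog n)$, selected through the global coin together with the keyed hash, so that any node can check whether a purported responder actually lies in $K$. Each node $s \in S$ then samples $\Theta(\sqrt{n}\log n)$ random nodes in $V$ and queries them for the authenticated value; by a balls-and-bins argument the query hits at least one honest member of $K$ with probability $1 - n^{-\Omega(1)}$, and a union bound over the $k$ queriers preserves correctness with high probability. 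The message cost is $\tilde{O}(\sqrt{n})$ for the implicit-BA call plus $k \cdot \tilde{O}(\sqrt{n}) = \tilde{O}(k\sqrt{n})$ for the query phase, matching the claimed bound. The round count follows by executing the query phase for the $k$ nodes in $S$ across $\tilde{O}(k)$ scheduled phases, which comfortably absorbs any CONGEST contention on individual committee members and adds to the $O(\log^2 n)$ rounds of the implicit-BA invocation.

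The main obstacle is authentication: a Byzantine responder may return a plausible but fabricated value, and without verifiability the querier has no way to adjudicate between competing answers. The resolution is exactly the verifiable-committee mechanism inherited from Theorem~1---the global-coin--seeded hash lottery that defines $K$ together with the PKI signatures on $v$---so that $s$ accepts a response only if the responder's committee membership and signature both check out. A secondary issue, that $K$ must be simultaneously large enough and well enough spread for $\tilde{O}(\sqrt{n})$ uniform samples to hit it w.h.p., is already controlled by the committee-sampling analysis underlying Theorem~1, so beyond the union bound over $S$ no new probabilistic ingredients are required.
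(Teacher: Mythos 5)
Your case split is sensible and the $k \ge \sqrt{n}$ branch is correct: there $\min\{\tilde{O}(k\sqrt{n}),\tilde{O}(n)\}=\tilde{O}(n)$, and invoking the explicit-agreement protocol of Theorem~\ref{thm:explicit} makes every honest node (hence every honest node of $S$) decide within $O(\log^2 n)$ rounds and $O(n\log n)$ messages, so subset agreement is immediate.

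The $k<\sqrt{n}$ branch has a genuine quantitative gap. The decided committee produced by the construction behind Theorem~\ref{thm:main-implicit} is the candidate set $\mathcal{C}$, of size $c\log n=\Theta(\log n)$ (Lemma~\ref{lemma:atmostfaulty} and Step~1 of Algorithm~\ref{alg:agreement}); it is \emph{not} of size $\Omega(\sqrt{n}/\polylog n)$ as your argument requires. A node $s\in S$ that samples $\Theta(\sqrt{n}\log n)$ ports uniformly therefore hits a member of $\mathcal{C}$ with probability only about $\sqrt{n}\log n\cdot\Theta(\log n)/n=\tilde{O}(1/\sqrt{n})$, which tends to $0$, so the balls-and-bins step at the heart of your query phase fails. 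You cannot repair this by enlarging $K$ to include the referee nodes, because your verification mechanism (membership in the global-coin--seeded hash lottery) certifies only the $\Theta(\log n)$-size candidate set, not the privately sampled referees. The missing idea is a \emph{two-sided} rendezvous of the kind already used in Lemma~\ref{lemma:common-referee}: have the committee members disseminate the decided value, certified by signatures from a majority of $\mathcal{C}$ (which makes the value self-authenticating regardless of who relays it, since fewer than $|\mathcal{C}|/2$ committee members are Byzantine), to their $\Theta(\sqrt{n\log n})$ referees, while each $s\in S$ independently samples $\Theta(\sqrt{n\log n})$ nodes of its own; two such $\tilde{O}(\sqrt{n})$-size random samples intersect in an honest node w.h.p.\ by the same birthday-paradox computation, and a union bound over the $k$ nodes of $S$ finishes the coverage argument. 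That yields $\tilde{O}(\sqrt{n})+k\cdot\tilde{O}(\sqrt{n})=\tilde{O}(k\sqrt{n})$ messages and, after serializing relays through shared referees in the CONGEST model, $\tilde{O}(k)$ rounds. You should also address certification in the default-value case, where no signature chain on an input value exists; this is handled by having committee members sign the decided value itself after agreement.
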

}
\begin{theorem}[Lower Bound]
Consider any algorithm $A$ that has access to an unbiased global coin and sends at most $f(n)$ messages (of arbitrary size) with high probability on a complete network of $n$ nodes. If $A$ solves the authenticated Byzantine agreement under honest majority with constant probability, then $f(n) \in \Omega(\sqrt{n})$. 
\end{theorem}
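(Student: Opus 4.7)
My plan is a contradiction argument that turns the sparseness of communication into an indistinguishability attack on Byzantine agreement.

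\emph{Setup.} Assume for contradiction that $A$ sends $f(n) = o(\sqrt{n})$ messages with high probability and achieves agreement with some constant probability $p$. Since every message has two endpoints, the set $S$ of honest nodes incident to any honest-to-honest message satisfies $|S| \le 2 f(n) = o(\sqrt{n})$ w.h.p. Hence $(1/2+\eps)n - o(\sqrt{n}) = \Theta(n)$ honest nodes are \emph{isolated}: under a suitable adversary they will receive no messages at all, honest or Byzantine.

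\emph{Main argument.} I would fix the following adversary: every Byzantine node stays silent, and each honest input is drawn independently and uniformly from $\{0,1\}$. An isolated honest node then sees only its own input, its ID, the global coin, and its private randomness, and must decide (if at all) from this local view. By validity, in the all-$0$ input execution against this silent Byzantine strategy any such node that decides must output $0$; symmetrically for all-$1$. Crucially, the isolated node's view is literally unchanged when only the inputs of other isolated peers change, so its decision rule on the random-input execution must output its own input whenever it decides. With uniformly random inputs both $0$'s and $1$'s appear among the $\Theta(n)$ isolated honest nodes with probability $1-e^{-\Omega(n)}$, so if isolated honest nodes decide with any constant positive probability (conditioned on a typical coin value) then two of them disagree, violating agreement.

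\emph{Completing the case split and the main obstacle.} The remaining case is that the algorithm forces isolated honest nodes to essentially never decide, so every honest decider lies in the active set $S$ of size $o(\sqrt{n})$. Here I would invoke a Dolev--Reischuk-style splicing argument adapted to the authenticated, global-coin setting: since $|S \cap \mathrm{honest}| = o(\sqrt{n})$ while the $(1/2-\eps)n$ Byzantine nodes can still speak freely, the adversary simulates two consistent honest-majority executions, one with all inputs $0$ and one with all inputs $1$, and delivers these two views to disjoint subgroups of $S$. PKI signatures are respected in both simulations (Byzantine nodes sign genuine messages in each view), and the shared coin offers no advantage across disjoint message neighborhoods. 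A hybrid argument then yields an execution in which some honest node in $S$ decides $0$ while another decides $1$, again contradicting agreement. I expect this second case to be the main technical hurdle: the splicing must be carried out in the randomized PKI/global-coin model, and the argument should exploit the birthday-type inequality $|S|^2 \ll n$ implied by $|S| = o(\sqrt{n})$ to guarantee two disjoint honest neighborhoods inside $S$. Once that is in place the two cases together yield the claimed lower bound $f(n) = \Omega(\sqrt{n})$.
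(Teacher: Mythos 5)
Your overall strategy is the same birthday-paradox idea the paper uses: with only $o(\sqrt{n})$ messages the contact sets of the communicating honest nodes are too small to intersect, so you can isolate two groups of honest deciders and drive them to opposite decisions. Your Case~1 (isolated honest nodes must echo their own input if they decide, hence under implicit agreement they simply never decide) is a clean observation that the paper does not spell out. However, the decisive step is your Case~2, and there you have a genuine gap on two counts. First, you explicitly leave it as a plan (``I would invoke a Dolev--Reischuk-style splicing argument\ldots I expect this second case to be the main technical hurdle''), so the theorem is not actually proved. Second, the splicing as you describe it does not go through: the adversary cannot ``deliver two inconsistent views to disjoint subgroups of $S$,'' because it has no power over honest-to-honest messages, and nothing in your argument rules out that the active honest nodes in $S$ form a single connected communication component in which every decider sees the same transcript. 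Before any splicing you must first show that the honest deciders split into at least two communication components that exchange no information.

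That missing piece is exactly what the paper supplies. It argues that the initiators must be chosen randomly (a deterministic choice can be taken over by the static adversary), invokes the result of Augustine et al.\ that an algorithm sending $o(\sqrt{n})$ messages leaves two \emph{disjoint} communication graphs $H_u$ and $H_v$ with high probability (Lemma~\ref{lem:know-each-other} is the two-node version of this), and then exploits symmetry: the two components are information-theoretically identical (anonymous $KT_0$ network, same global coin, same algorithm), each contains close to a $(1/2-\eps)$ fraction of Byzantine members because membership is random, and under the uniform $0$/$1$ input distribution each is internally balanced, so the Byzantine nodes inside the two components can steer them to opposite outputs (Lemma~\ref{lem:opposite-decision}). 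For the multi-valued case the paper instead reduces directly to the $\Omega(\sqrt{n})$ leader-election lower bound with a global coin. To repair your proof you would need to (i) establish the decomposition of the active set into disjoint communication components, and (ii) replace the Dolev--Reischuk splicing (which is tailored to deterministic protocols and delivers forged views through corrupted relays) with an argument that works when the two components are symmetric and each is nearly half Byzantine, as the paper does.
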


\onlyLong{Finally, we implement our implicit agreement algorithm and show its effectiveness and efficiency for different sizes of Byzantine nodes. When the Byzantine nodes behave randomly, the experimental results perform better than the theoretical guarantees.}

\subsection{Model and Definitions}\label{sec:model}
The network is synchronous and fully-connected graph of $n$ nodes. Initially, nodes do not know their neighbors, also known as $KT_0$ model \cite{Pelege2000}. Nodes have access to an unbiased global coin through which they can generate shared random bits. The network is $f \leq (1/2-\epsilon)n$ resilient, i.e., at most $(1/2-\epsilon)n$ nodes (among $n$ nodes) could be faulty, for any constant $\eps>0$. We consider Byzantine fault \cite{LSP82}. A Byzantine faulty node can behave maliciously such that it sends any arbitrary message or no message in any round to mislead the protocol, e.g., it may send different input values to different nodes, or it may not send any message to some of the nodes in a particular round. 
We assume that a {\em static} adversary controls the Byzantine nodes, which selects the faulty nodes before the execution starts. However, the adversary can adaptively choose when and how a node behaves maliciously. Further, the adversary is rushing and has full information-- the adversary knows the states of all the nodes and can view all the messages in a round before sending out its own messages for that round. We assume that each node possesses multi-valued input of size $O(\log n)$ provided by the adversary.

We assume the existence of digital signatures, Public Key Infrastructure (PKI) and hash function. Each node possesses a public-secret key pair ($p_{k}, s_{k}$). Secret keys are generated randomly, and correspondingly public keys are generated with the help of a prime order group's generator. Therefore, the public keys are not skewed but random. Trusted authority also provides other cryptographic primitives for each node, and certifies each node’s public keys. 
Nodes use digital signatures for the authentication of any information. We abstract away the details of the cryptographic setup; assuming it is a standard framework. Public key ($p_k$) of all the nodes, hash function, shared random bits (generated through global coin) and $n$ are the common knowledge for all the nodes. 

We consider the {\em CONGEST} communication model \cite{Pelege2000}, where a node is allowed to send a message of size (typically) $O(\log n)$ or $O(\polylog (n))$ bits through an edge per round. The message complexity of an algorithm is the total number of messages sent by all the non-faulty nodes throughout the execution of the algorithm. 

\medskip
\begin{definition}[Implicit Byzantine Agreement]\label{def:implicitBA} 
Suppose initially all the nodes have an input value (say, provided by an adversary).  An implicit Byzantine agreement holds when the following properties hold: (i) the final state of all the non-Byzantine nodes is either “decided'' or “undecided''; (ii) all the “decided'' non-Byzantine nodes must agree on the same value (consistency property); (iii) if all the non-Byzantine nodes have the same input value then they must decide on that value (validity property); (iv) all the non-Byzantine nodes eventually reach to the final state, where at least one non-Byzantine node must be in the “decided'' state (termination).
\end{definition}

\onlyLong{
\begin{definition}[Byzantine Subset Agreement]\label{def:subsetBA}
Suppose initially all the nodes have an input value (say, provided by an adversary). The Byzantine subset agreement is an agreement by a (specified) non-empty subset $S \subseteq V$ of nodes such that the fraction of Byzantine nodes in $S$ is preserved, i.e., $f_S \le (1/2 - \eps)|S|$, where $f_S$ is the number of Byzantine nodes in $S$.  We assume that each node knows whether it belongs to $S$ or not, but does not know the identities of the other nodes in the subset. The agreement on $S$ holds when the final state of all the non-faulty nodes in $S$ is “decided” and the deciding value of all of them follows the properties: (i) all the non-faulty nodes must decide on the same value (consistency property); (ii) if all the non-faulty nodes of network have the same initial value then they (the non-faulty nodes in $S$) must decide on that value (validity property); (iii)  all the non-faulty nodes (of $S$) eventually decide on a value (termination property).

\end{definition}
}
\begin{definition}[Keyed Hash Function] 
Let $\mathcal{K}_h, X$ be two non-empty finite sets and $H$ be an $b$-bit function such that $H : \mathcal{K}_h \times X \rightarrow \{0, 1\}^b$, where $H$ follows three properties: (i) Preimage Resistant - Given a hash value $h$, it should be difficult to find any message $m$ such that $h = H(k,m)$. (ii) Second Preimage Resistant - Given an input $m_1$, it should be difficult to find a different input $m_2$ such that $H(k,m_1) = H(k,m_2)$. (iii) Collision Resistant - It should be difficult to find two different messages $m_1$ and $m_2$ such that $H(k,m_1) = H(k,m_2)$.

\end{definition}


\onlyLong{
\subsection{Byzantine Agreement vs. Byzantine Broadcast}
Byzantine Agreement is typically studied in two forms while they are equivalent, i.e., one can be reduced to the other \cite{LSP82}. In the agreement version, also known as {\em Byzantine Consensus}, prior to the protocol starts, all the nodes receive an input value. Byzantine agreement is achieved if the following three properties hold.\\
Consistency: all of non-faulty nodes must decide on the same value. \\
Validity: if all the non-faulty nodes have the same initial value, then they must decide on that value.\\
Termination: all the non-faulty nodes eventually decide on a value.\\
}
In the {\em Byzantine Broadcast}, there is a designated sender (could be Byzantine or honest) who broadcasts the input values to the nodes. Termination and consistency are the same as in the case of Byzantine agreement. In the case of validity, all the non-faulty nodes output the same value if the sender is honest. Also, that value should be the input value of the sender.

\section{Related Work}\label{sec:relatedwork}
Byzantine agreement and Byzantine broadcast have been studied extensively in various models and settings in the last four decades, starting from its classic introduction by Lamport, Shostak and Pease  \cite{LSP82, PSL80}. They presented protocols and fault tolerance bounds for two settings (both synchronous). Without cryptographic assumptions (the unauthenticated setting), Byzantine broadcast and agreement can be solved if $f < n/3$. Assuming digital signatures (the authenticated setting), Byzantine broadcast can be solved if $f < n$ and Byzantine agreement can be solved if $f < n/2$. The initial protocols had exponential message complexities \cite{LSP82, PSL80}. Fully polynomial protocols were later shown for both the authenticated $(f < n/2)$ \cite{DS83} and the unauthenticated $(f < n/3)$ \cite{GM98} settings. Both protocols require $f + 1$ rounds of communication, which matches the lower bound on round complexity for deterministic protocols \cite{FL82}.

\begin{table*}[t]
\centering
\begin{tabular}
{|P{2.45cm}|P{2.00cm}|P{2.3cm}|P{2.25cm}|P{2.2cm}|P{2.75cm}|}
\hline
\multicolumn{6}{|c|} {Comparison of the Results}\\
\hline
 Protocol & Agreement Type & Communication (in bits) & Adversary  & Cryptographic Assumptions & Resilience  \\
\hline

&&&&&\\
Santoni et al. \cite{SGH13}& Implicit    & $\Tilde{O}(n)$ & Non-adaptive  & No & $f< n/(3+\epsilon)$ \\
&&&&&\\
King-Saia \cite{KS11}& Explicit     & $\Tilde{O}(n^{1.5})$ & Adaptive  & No & $f\leq (1/3-\epsilon)n$ \\
&&&&&\\

Dolev-Strong \cite{DS83}& Explicit      & $\Tilde{O}(n^3)$ & Adaptive  & Yes&  $f<n/2$ \\
&&&&&\\
 Momose-Ren \cite{MR20}& Explicit      & $\Tilde{O}(n^2)$ & Adaptive  & Yes & $f< n/2$ \\
&&&&&\\
Abraham et al. \cite{ADDN019}& Explicit      & $O(n^2)^*$ & Adaptive  & Yes & $f< n/2$ \\
&&&&&\\
\textbf{This paper} & Implicit     & $ \Tilde{O}(n^{0.5})$& Non-adaptive  & Yes& $f\leq (1/2-\epsilon)n$ \\
&&&&&\\
\textbf{This paper} & Explicit     & $ \Tilde{O}(n)$& Non-adaptive  & Yes& $f\leq(1/2-\epsilon)n$ \\
&&&&&\\
\hline
\end{tabular}
\caption{Comparison of various model with our result. $\epsilon$ is any positive constant. Our results assume a global coin and hash function while others are not. * indicates the bound holds in expectation.}
\label{tbl:related_work}
\end{table*}


Our Byzantine agreement algorithm makes use of a few past results. First, we make use of the concept of a set of candidate nodes, which is a subset of nodes. The notion of a committee (set of candidate nodes) is  used in, e.g., \cite{AMP18, GK10, KM21, KPPRT15}. Finally, we adapt the Byzantine Agreement algorithm designed by Dolev et al. \cite{DS83}. 

The previous results in the same direction are summarized in the Table \ref{tbl:related_work}. \onlyLong{Santoni et al. \cite{SGH13} achieved the $\Tilde{O}(n)$ communication complexity against a non-adaptive adversary while resilience is $f< n/(3+\epsilon)$ and rushing adversary with $KT_0$ model. The work of King-Saia \cite{KS11} and Abraham et al. \cite{ADDN019} used the shared random value in one or other way. King-Saia reached at almost everywhere to everywhere agreement in communication complexity (communication complexity of a protocol is the maximum number of bits sent by all the non-Byzantine nodes combined across all executions) $\Tilde{O}(n^{1.5})$ bits with adaptive security while resilience is $f<(1/3-\epsilon)n$ and in the  $KT_1$ model. Abraham et al. showed the quadratic  communication complexity in expectation with an adaptive adversary while used the cryptographic assumptions and tolerated $f<n/2$  Byzantine nodes. Dolev-Strong \cite{DS83} achieved cubic communication complexity with adaptive adversary and cryptographic assumptions by tolerating $f<n/2$ Byzantine nodes. Later, Momose-Ren \cite{MR20} improved the communication complexity to quadratic communication.}  
In comparison with our work, we are using shared random value and non-adaptive adversary with rushing adversary. Shared random value help us to break the linear communication (message) complexity. In implicit agreement, we have sublinear communication complexity while in explicit agreement it is linear with cryptographic assumptions and tolerate $f \leq (1/2-\epsilon)n$ Byzantine nodes. 


\section{Authenticated Implicit Byzantine Agreement}\label{sec:agreement}
\vspace{-0.15cm}
In this section, we present a randomized Byzantine agreement  algorithm in a complete $n$-node network that tolerates $f \leq (1/2-\epsilon)n$ Byzantine nodes under a public-key infrastructure, keyed hash function and access to a global coin. The algorithm incurs $\tilde{O}(\sqrt{n})$ messages, has latency $\tilde{O}(1)$, and has high success probability.  

In the algorithm, we run a subroutine BA protocol, which can tolerate $f \leq (1/2-\epsilon)n$ Byzantine nodes and may have a polynomial message (and time) complexity. In fact, we adapt the classical algorithm presented by Dolev-Strong \cite{DS83}.\footnote{One can use other suitable BA protocols, e.g., the protocol in \cite{MR20}.} Dolev-Strong designed an algorithm for the Byzantine broadcast (BB) problem, which can be converted into a Byzantine agreement algorithm with an initial round to broadcast the input Byzantine nodes under a public-key infrastructure and access to a global coin. The communication/bit complexity is $O(\kappa n^3)$ \cite{Fitzi02}. However, using multi-signature it can be improved to $O(\kappa n^2 + n^3)$, where $\kappa$ is a security parameter which is essentially the maximum size of the messages \cite{MR20}. 
While the original Dolev-Strong BB protocol tolerates $f \le n-1$ faults, the converted BA protocol works for the honest majority nodes, i.e., tolerates $f< n/2$ Byzantine faults which is optimal for an authenticated BA \cite{ADDN019,FM97,KK09,LSP82,SV17,MR20}. 

Dolev-Strong algorithm is deterministic and has a latency of $f+1$ rounds. The general idea of the BB protocol is to form a signature chain consisting of signatures from distinct nodes. A signature chain of $f+1$ signatures must contain a non-faulty signature from a non-faulty node which can send the value to all the other nodes. The protocol is designed in such a way that in $f+1$ rounds it forms a signature chain of size $f+1$. We adapted the Dolev-Strong BB protocol for the Byzantine agreement problem and used it in our implicit BA algorithm.   

Let us now describe the implicit BA algorithm. The public key ($p_k$) of the nodes is known to all the nodes (as distributed by the trusted third party), but a node does not know which port or edge is connecting to which node (having a particular public key). To minimize the message complexity, a generic idea is to select a set of small-size candidate nodes, which will be responsible for solving the (implicit) agreement among themselves. Thus, it is important to have honest majority in the set of candidate nodes. For this, a random set of nodes, called as {\em candidate nodes} or {\em committee nodes}, of size $O(\log n)$ is selected. Let us denote the candidate nodes set by $\mathcal{C}$. A Byzantine node may try to claim that it is in $\mathcal{C}$, which needs to be stopped to guarantee the honest majority in $\mathcal{C}$. To overcome this problem, we take the help of a global coin and a keyed hash function, which together determine the candidate nodes.

A common random number, say $r$, is generated with the help of the global coin. The random number should be large enough to use as the key to the hash function. Note that the random number is generated after the selection of the Byzantine nodes by the adversary. Every node uses its respective public key $(p_k)$ as the message and $r$ as the key of the hash function, say, $H$. That is they compute $H_r(p_k)$. For each $p_{k_i}$, we represent its hash value $H_r(p_{k_i})$ as $H_i$. Since the hash values are random (with high probability), the smallest $c\log n$ values among the $n$ hash values are chosen to be the candidate nodes, where $c$ is a suitable constant (to be fixed later). More precisely, a node $i$ with the hash value $H_i$ is in $\mathcal{C}$ if it is in the smallest $c\log n$ values of the set $\{H_i \, : \, i = 1, 2, \dots, n \}$.  
Therefore, a node can easily figure out the candidate nodes since it knows the random number $r$, hash function and the public keys of all the nodes. However, the node does not know the ports connecting to them (as $KT_0$ model).   

Although a node knows all the candidate nodes (in fact, their public keys), it does not know the edges connecting to the candidate nodes, since the network is anonymous, i.e., $KT_0$ model. It can be known by the candidate nodes by sending a message to all the nodes,  but that will cost $n\log n$ messages. Since knowing each other is message expensive in this model, the candidate nodes communicate among themselves via some other nodes. For this, each candidate node randomly samples $\Theta(\sqrt{n\log n})$ nodes among all the $n$ nodes; call them as {\em referee nodes}. The reason behind sampling so many referee nodes is to make sure there is at least one common “non-faulty'' referee node between any pair of candidate nodes. The candidate nodes communicate with each other via the referee nodes. Notice that a node may be sampled as a referee node by multiple candidate nodes. It might happen that a Byzantine referee node may change the value of an honest candidate node before forwarding it to the candidate nodes. To avoid this, we take advantage of digital signature. Each referee node signs the input value before transmitting it to its referee nodes. As a digital signature helps to detect forging messages, each candidate node considers only the genuine messages received from the referee nodes. 

Thus, we have a small committee of nodes (i.e., $\mathcal{C}$) with the highly probable honest majority and the committee nodes can communicate via the referee nodes. Then we apply the Dolev-Strong BA protocol \cite{DS83} in the committee to achieve agreement. Let us now present the adapted Dolev-Strong algorithm to work for the Byzantine agreement.  

\noindent \textbf{Step 0:} Each candidate node $u$ does the following in parallel. $u$ signs and sends its input value to its corresponding referee nodes, say, $\mathcal{R}_u$. Each referee node $w$ sends all the received values to its respective candidate nodes, say, $\mathcal{C}_w$. Therefore, the candidate nodes have the input values of all the candidate nodes. Now each candidate node proposes a value for the agreement based on the priority, along with all the signatures received corresponding to that input value. If there is a value that is sent by the majority of the nodes (i.e., more than $(c \log n)/2$ nodes), then that value gets the {\em highest priority}. In case of more than one majority, the value proposed by the maximum number of nodes gets the highest priority. There might be the case, two values are proposed by the same number of nodes; in that case, the larger input value gets the highest priority. If a candidate node has the highest priority input value, then it sends the value (after signing) to all the candidate nodes along with the received signatures for the highest priority value. Otherwise, the candidate node does not send anything. The reason of getting more than one majority value is that a Byzantine node may propose different values to different nodes. If no such majority value is received, then the candidate nodes decide on a default value, say, the minimum in the input value set. By sending the input value, we mean sending the input value along with all the signatures corresponding to that input value. 

Then, the following two steps are performed iteratively\footnote{An iteration is the number of rounds required to send messages from one candidate node to all the candidate nodes via the referee nodes. Since we consider the CONGEST model, an iteration may take up to $O(\log n)$ rounds in our algorithm.}.
 
\noindent \textbf{Step 1:} If a candidate node $u$ receives a set of at least $i$ legitimate signatures in $i^{th}$ iteration with a higher priority value than its earlier sent value, then $u$ proposes this new highest priority value along with all the signatures to its referees nodes $\mathcal{R}_u$. 

\noindent \textbf{Step 2:} If a referee node $w$ receives a set of at least $i$ legitimate signatures in $i^{th}$ iteration (with the highest priority value) then $w$ forwards this highest priority value to its corresponding candidate nodes $\mathcal{C}_w$. 

The above two steps (i.e., Step~1 and~2) are performed for $O(c\log n)$ iterations and then the algorithm terminates. In the end, all the (non-faulty) candidate nodes have the same value, either the default value or the value possessed by the majority of nodes (the highest priority value). Intuitively, there are $O(c \log n )$ candidate nodes and a single node may propose the highest priority value in each iteration (say, the single node with the highest priority value is faulty and send to only faulty nodes) and thus the highest priority value may propagate slowly. But eventually, the highest priority value is received by all the candidate nodes as the set of candidate nodes contains the majority of the non-faulty nodes (see Lemma~\ref{lemma:atmostfaulty}) and there is a common non-faulty referee node between any pair of candidate nodes (see Lemma~\ref{lemma:common-referee}). On the other hand, if there is no highest priority value, then they decide on the default value. 
Thus, the (honest) candidate nodes agree on a unique value. 
A pseudocode is given in Algorithm~\ref{alg:agreement}.

\begin{algorithm}[ht]
\caption{\sc Authenticated-Implicit-BA}\label{alg:agreement}
\begin{algorithmic}[1]
\Require{A complete $n$ node anonymous network with $f \le (1/2-\epsilon)n$ Byzantine nodes. Each node receives an input value provided by an (static) adversary, a pair of public-private keys $(p_k, s_k)$, keyed hash function and a global coin. $\eps>0$ is a fixed constant.} 
\Ensure{Implicit Agreement.}
\Statex
\State Select $c \log n$ nodes as the candidate nodes set (say, $\mathcal{C})$, which have the smallest $c \log n$ hash values generated with the help of public key and random number. The value of the constant $c$ is $3 \alpha/ \eps^2$, follows from Lemma~\ref{lemma:atmostfaulty}. 


\State Each candidate node $u$ randomly samples $2\sqrt{n \log n}$ nodes as referee nodes (say, $\mathcal{R}_u$).  

\State Each candidate node $u$ signs and sends its input value (with signature) to $\mathcal{R}_u$. 

\State Each referee node $w$ sends all the received values to their respective candidate nodes $\mathcal{C}_w$ along  with the legitimate signatures one by one. It takes $O(\log n)$ rounds.

\State Each candidate node $u$ sends the input value along with all the received signatures to  $\mathcal{R}_u$ based on the (highest) priority. \Comment{Highest priority is defined in the description, Step~0.}

\For{the next $(c \log n)$ iterations, the candidate and referee nodes in parallel} 
    \State Each referee node $w$ checks:  
    \If{$w$ receives a set of at least $i$ legitimate signatures in the $i^{th}$ iteration}
        \State {$w$ sends highest priority value to $\mathcal{C}_w$.}
    \Else
        \State $w$ does not send any messages.
    \EndIf
    
    \State{Each candidate node $u$ checks:} 
    \If{$u$ receives a set of at least $i$ legitimate signatures in the $i^{th}$ iteration with a highest priority value than it sent earlier}
        \State{$u$ sends highest priority value to $\mathcal{R}_u$.}
    \Else
        \State{$u$ does not send any messages.}
    \EndIf
\EndFor
\State{All the (non-faulty) candidate nodes have the same highest priority value on which they agree. Otherwise, if they do not receive any highest priority value, they agree on a default value, say, the minimum in the input value set.}
\end{algorithmic}
\end{algorithm}

Let us now show the above claims formally. We first show that the majority of the nodes in the candidate set are honest. 
\begin{lemma}\label{lemma:atmostfaulty}
The number of Byzantine nodes in the candidate set $\mathcal{C}$ is strictly less than $\frac{1}{2} |\mathcal{C}|$ with high probability. 
\end{lemma}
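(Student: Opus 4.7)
The plan is to show that, once the global random number $r$ is revealed, the set $\mathcal{C}$ of the $c \log n$ smallest hash values behaves like a uniformly random subset of the $n$ nodes, and then apply a Chernoff-style concentration bound.

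First, I would set up the randomness carefully. Since the adversary is \emph{static}, it fixes the set $F$ of Byzantine nodes (with $|F| \le (1/2 - \eps) n$) \emph{before} the global coin produces $r$. Under the (standard) assumption that the keyed hash function $H_r$ behaves as a random oracle for a fresh random key $r$, the multiset $\{H_r(p_{k_1}), \dots, H_r(p_{k_n})\}$ consists of $n$ essentially i.i.d.\ uniform values over the hash range, independent of which indices lie in $F$. Consequently, the ordering of these hash values is a uniformly random permutation of the $n$ nodes, and the set $\mathcal{C}$ of the $c \log n$ smallest hash values is a uniformly random subset of size $c \log n$ of $\{1, \dots, n\}$, chosen without replacement and independently of $F$.

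Second, let $X = |\mathcal{C} \cap F|$ denote the number of Byzantine candidate nodes. Then $X$ is hypergeometrically distributed with mean
\[
\mu \;=\; \mathbb{E}[X] \;=\; \frac{|F|}{n}\, c \log n \;\le\; \Bigl(\tfrac{1}{2}-\eps\Bigr) c \log n.
\]
I want to bound $\Pr[X \ge \tfrac{1}{2} c \log n]$. Writing $\tfrac{1}{2} c \log n = (1+\delta)\mu$ with $\delta \ge \tfrac{2\eps}{1-2\eps}$, a positive constant, I will invoke the standard Chernoff bound (which applies to hypergeometric sampling, e.g.\ Hoeffding's inequality for sampling without replacement) to obtain
\[
\Pr\Bigl[X \ge \tfrac{1}{2} c \log n\Bigr] \;\le\; \exp\!\Bigl(-\tfrac{\delta^2 \mu}{3}\Bigr) \;\le\; \exp\!\Bigl(-\tfrac{\delta^2 (1/2-\eps) c \log n}{3}\Bigr).
\]
Plugging in $c = 3\alpha/\eps^2$ (as specified in the algorithm) and simplifying the constant $\delta^2(1/2-\eps)/3$ in terms of $\eps$, the right-hand side is at most $n^{-\alpha}$ for any desired constant $\alpha$, so the claim holds with high probability.

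The main obstacle is the first step: formally justifying that the $n$ hash values $H_r(p_{k_i})$ are ``random enough'' that the resulting subset $\mathcal{C}$ is uniformly distributed and independent of $F$. This relies on modelling the keyed hash function $H_r$ (with $r$ drawn from the global coin after $F$ is fixed) as a random oracle, together with the fact that the public keys $p_{k_i}$ are distinct and generated honestly; collisions among the $H_r(p_{k_i})$ are avoided by collision resistance except with negligible probability. Once that modelling step is accepted, the remainder is a routine Chernoff/Hoeffding computation, and the choice of $c = 3\alpha/\eps^2$ cleanly delivers the high-probability bound.
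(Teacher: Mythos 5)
Your proposal is correct and follows essentially the same route as the paper: view $\mathcal{C}$ as a uniformly random $c\log n$-subset chosen independently of the Byzantine set and apply a Chernoff-type concentration bound to the number of Byzantine members, with the same choice $c = 3\alpha/\eps^2$. You are in fact somewhat more careful than the paper's own proof, which simply asserts that the nodes of $\mathcal{C}$ are uniform and treats the indicator variables as independent, whereas you explicitly justify the uniformity and independence from the faulty set via the random-oracle modelling of $H_r$ (with $r$ drawn after the static adversary commits) and note that the sampling is without replacement, so the hypergeometric/Hoeffding form of the bound is the one actually needed.
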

\begin{proof}
Let $\alpha$ fraction of the nodes in the network are Byzantine where $\alpha = 1/2 - \epsilon$ is a fixed constant. So $\alpha + \epsilon = 1/2$. The nodes in $\mathcal{C}$ are chosen uniformly at random (i.e., with probability $1/n$) and the number of Byzantine nodes is at most $\alpha n$, the probability that a particular node in $\mathcal{C}$ is Byzantine is at most $\alpha$ . Let $\mathcal{C}$ contain $k$ nodes, $\{u_i \,|\, i= 1, 2, \dots, k\}$. Let us define random variables $X_i$s such that $X_i = 1$ if $u_i$ is Byzantine, and $0$ otherwise. Further, $X = \sum_{i=1}^{k} X_i$ is the total number of Byzantine nodes in $\mathcal{C}$. Then, by linearity of expectation, $E[X] \le \alpha  k$. Then, by Chernoff bounds \cite{MU04},

 \begin{align} 
\Pr(X \ge (\alpha + \eps) k) & = \Pr(X \ge (1 + \eps/\alpha) E[X]) \le \exp{\left(- E[X] (\eps/\alpha)^2 /3\right)} \notag\\ 
& \le  \exp{\left(- (k \eps^2)/(3\alpha) \right)} 
\text{for $k = |\mathcal{C}| = (3\alpha/\eps^2)\log n$} \notag  \notag
\end{align}

Thus,  $\Pr(X < (\alpha + \eps) |\mathcal{C}|) = \Pr(X < \frac{1}{2}|\mathcal{C}|) > 1-1/n$. In other words, $\mathcal{C}$ contains at most $ (1/2 - \delta)|\mathcal{C}|$ Byzantine nodes with high probability, for any fixed $\delta> 0$.
\end{proof}

The candidate nodes communicate with each other via the referee nodes, which are sampled randomly by the candidate nodes. The number of referee nodes is sampled in such a way that there must be a common referee node between every pair of candidate nodes (so that the candidate nodes can communicate) and at the same time keep the message complexity lower. In fact, we need to guarantee a stronger result. Namely, there must be a {\em non-faulty} common referee node for reliable communication. 
\begin{lemma}\label{lemma:common-referee}
Any pair of candidate nodes have at least one common non-faulty referee node with high probability.
\end{lemma}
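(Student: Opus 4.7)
The plan is to show that a specific pair $(u,v) \in \mathcal{C} \times \mathcal{C}$ shares an honest referee with probability at least $1 - 1/n^{\Omega(1)}$, and then union-bound over the $\binom{|\mathcal{C}|}{2} = O(\log^2 n)$ pairs. Since the adversary is static, the set $H$ of honest nodes is fixed before any sampling occurs, so the independent random samples drawn by $u$ and $v$ are independent of which nodes are honest; this is the first observation I would invoke.

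Fix $u, v \in \mathcal{C}$ and set $m = 2\sqrt{n\log n}$. For an honest node $w \in H$, the events $\{w \in \mathcal{R}_u\}$ and $\{w \in \mathcal{R}_v\}$ are independent because the two candidate nodes sample independently, and each has probability at least $m/n$. Therefore
\[
\Pr[w \in \mathcal{R}_u \cap \mathcal{R}_v] \;\geq\; \left(\tfrac{m}{n}\right)^2 \;=\; \tfrac{4\log n}{n}.
\]
Letting $Y = |H \cap \mathcal{R}_u \cap \mathcal{R}_v|$ and using $|H| \geq (1/2+\epsilon)n$, linearity of expectation gives
\[
\mathbb{E}[Y] \;\geq\; (1/2+\epsilon)\, n \cdot \tfrac{4\log n}{n} \;=\; (2+4\epsilon)\log n.
\]

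Next I would argue that $Y > 0$ with high probability. The indicator variables $\{\mathbf{1}[w \in \mathcal{R}_u \cap \mathcal{R}_v]\}_{w \in H}$ are, across different $w$, negatively correlated (sampling without replacement within a single candidate produces negative correlation among indicators, and these compose multiplicatively across the two independent candidates), so the standard Chernoff lower-tail bound applies (e.g., the Panconesi–Srinivasan extension for negatively associated variables). Thus
\[
\Pr[Y = 0] \;\leq\; \Pr\!\left[Y \leq \tfrac{1}{2}\mathbb{E}[Y]\right] \;\leq\; \exp\!\left(-\mathbb{E}[Y]/8\right) \;\leq\; n^{-(2+4\epsilon)/8}.
\]
A union bound over the $\binom{c\log n}{2}$ candidate pairs yields failure probability at most $O(\log^2 n) \cdot n^{-\Omega(1)} = 1/n^{\Omega(1)}$, and one can inflate the constant in the sample size $m$ by an arbitrary constant factor to make the exponent as large as desired (which is what the ``high probability'' guarantee needs).

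The only subtle point, and therefore the main thing to justify carefully, is the independence/negative-correlation structure of the indicators; once that is in hand the computation is routine. The sample size $m = 2\sqrt{n\log n}$ is the tightest one that makes $\mathbb{E}[Y] = \Theta(\log n)$, which is exactly the threshold needed for Chernoff to absorb the $\log^2 n$ union bound, and it is precisely this balance that drives the $\tilde{O}(\sqrt{n})$ message complexity of the overall algorithm.
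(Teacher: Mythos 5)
Your proof is correct and follows essentially the same route as the paper's: both arguments compute the expected size of $\mathcal{R}_u \cap \mathcal{R}_v$ (namely $4\log n$) and apply a Chernoff lower tail, the only difference being that you count common \emph{honest} referees directly in one step, while the paper first shows that at least $2\log n$ common referees exist w.h.p.\ and then separately argues that at least one of them is non-faulty. Your version is, if anything, slightly more careful, since you make explicit the negative-association justification for the concentration bound and the union bound over the $O(\log^2 n)$ candidate pairs, both of which the paper leaves implicit.
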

\begin{proof}
Let us consider two candidate nodes $v$ and $w$, and let $x_i$ be the $i^{th}$ node selected by $v$. Let the random variable $X_i$ be $1$ if $x_i$ is also chosen by $w$ and $0$ otherwise. Since the $x_i$s are chosen independently at random, the $X_i$s are independent.  So, $Pr[X_i=1] = \frac{ 2\sqrt{n\log n}}{ n}$. Hence, the expected number of (choice of) referee nodes that $v$ and $w$ haven in common are:
\begin{align}
    E[X]&= E[X_1]+E[X_2]+ \dots + E[X_{(2\sqrt{n\log n})}] (\textit{by linearity}) \notag\\
    & = 2\sqrt{n\log n}\cdot \frac{ 2\sqrt{n\log n}}{n} = 4 \log n
\end{align}
Thus, by using the Chernoff bound ~\cite{MU04},  $\Pr[X < (1-\delta)E[X]] < e^{-\delta^2E[x]/2}$ for $\delta = 0.5$, we get
\begin{align}
\Pr[X < (1-0.5) 4\log n] < e^{- (0.5)^2 (4\log n)/2} < \frac{1}{\sqrt{n}}
\end{align}
For that reason at least $2 \log n$ choice of nodes by $v$ are jointly chosen by $v$ and $w$. As a consequence, the probability that none of these choices is non-faulty:
\begin{equation}
    {\left(\frac{1}{2}- \eps\right)}^{2 \log n} < \frac{1}{n}
\end{equation}
 Therefore, the probability of selecting at least one non-faulty referee node from the sampled nodes of $u$ is at least $1-1/n$.
\end{proof}

 

Lemma~\ref{lemma:atmostfaulty} ensures that a committee (i.e., the candidate set) of size $O(\log n)$ with honest majority can be selected. Lemma~\ref{lemma:common-referee} ensures that the committee nodes can communicate with each other reliably via the referee nodes. Thus, the BA problem on $n$ nodes reduces to a $O(\log n)$-size committee nodes. Then the Dolev-Strong BA protocol ensures that the committee nodes achieve Byzantine agreement among themselves deterministically. Therefore, the algorithm (Algorithm~\ref{alg:agreement}) correctly solves the implicit Byzantine agreement with high probability (i.e., among the committee nodes only). The non-faulty nodes which are not selected in the committee may set their state as “undecided'' immediately after the selection of the candidate nodes.   

Below, we analyze the message and time complexity of the algorithm. 
\begin{lemma}\label{lem:message}
 The message complexity of the authenticated implicit BA algorithm is $O(n^{0.5} \log^{3.5} n)$.
\end{lemma}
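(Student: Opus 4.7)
The plan is to account for the messages phase-by-phase, using the two structural facts that the candidate set $\mathcal{C}$ has $|\mathcal{C}| = c \log n$ nodes and each candidate has $|\mathcal{R}_u| = 2\sqrt{n\log n}$ referees, which are the only edges carrying traffic in Algorithm~\ref{alg:agreement}. First I would observe that the selection of the candidate set (Line~1) and the sampling of referees (Line~2) are purely local computations that use the global coin, the public keys, and the hash function, and therefore cost no messages at all. The first real traffic occurs in Lines~3--5 where each candidate ships a signed input along its $O(\sqrt{n\log n})$ referee links and each referee $w$ relays the values it has collected to the candidates in $\mathcal{C}_w$; since the total number of (candidate, referee) pairs is $|\mathcal{C}| \cdot |\mathcal{R}_u| = O(\sqrt{n}\,\log^{1.5} n)$, both the forward and the backward direction contribute $O(\sqrt{n}\,\log^{2.5} n)$ messages once we multiply by the up-to-$c\log n$ distinct values a referee may need to forward.

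The heart of the count is the loop spanning $O(c\log n)$ iterations. In iteration $i$, a propagated message must carry a chain of $i$ signatures in order to be considered legitimate. Since a signature has $\Theta(\log n)$ bits and the \textsc{CONGEST} model permits only $O(\log n)$ bits per edge per round, transmitting an $i$-signature chain over one candidate-to-referee (or referee-to-candidate) link costs $O(i)$ elementary messages. Each iteration therefore sends at most $O(i)$ messages across each of the $O(\sqrt{n}\,\log^{1.5} n)$ edges in use, for a per-iteration bound of $O(i\,\sqrt{n}\,\log^{1.5} n)$ messages in each direction.

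Summing over iterations yields
\begin{equation}
\sum_{i=1}^{c\log n} O\bigl(i\,\sqrt{n}\,\log^{1.5} n\bigr) \;=\; O\bigl((\log n)^2\cdot\sqrt{n}\,\log^{1.5} n\bigr) \;=\; O\bigl(\sqrt{n}\,\log^{3.5} n\bigr),
\end{equation}
which dominates the $O(\sqrt{n}\,\log^{2.5} n)$ contribution of the initial exchanges and gives the claimed bound. The main obstacle I anticipate is making the edge-by-edge count in the iterative phase rigorous: one must argue that a referee only forwards the (currently) highest-priority value and that at most one $i$-signature chain is propagated per link per iteration, rather than several, and that the $O(i)$ inflation factor from signature chains is unavoidable but tight in \textsc{CONGEST}. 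Lemma~\ref{lemma:atmostfaulty} and Lemma~\ref{lemma:common-referee} are used implicitly, only to guarantee that no additional communication beyond the above is necessary for correctness.
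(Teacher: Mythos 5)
Your proposal is correct and follows essentially the same route as the paper: both count the $O(\sqrt{n}\,\log^{1.5} n)$ candidate--referee links and charge each link $O(\log^2 n)$ \textsc{CONGEST}-sized packets over the course of the loop (you via $\sum_{i=1}^{c\log n} O(i)$, the paper via $O(\log n)$ packets per iteration times $O(\log n)$ iterations, after dividing its bit count by the $O(\log n)$ message size), with the initial exchange dominated in either accounting. The only cosmetic difference is that the paper tracks bits with an explicit security parameter $\kappa$ and then sets $\kappa = O(\log n)$, whereas you count messages directly.
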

\begin{proof}
In Step~3 and~4 of the algorithm, $O(\log n)$ candidate nodes send their input value with signature to the other candidate nodes via the $2 \sqrt{n \log n}$ referee nodes. Here the size of each message is $O(\kappa + \log n)$ bits, where $\kappa$ is the security parameter, the size of the signature. So these two steps uses $O(\sqrt{n \log n})\cdot O(\log n)\cdot O(\kappa +\log n) = O((\kappa +\log n) \sqrt{n \log^3 n})$ bits.

Inside the $O(\log n)$ iteration (Step~6): $O(\log n)$ candidate nodes may have at most $O(\log n)$ messages to be sent to the referee nodes for $O(\log n)$ rounds. The size of each message is $O(\kappa+\log n)$ bits. So it uses $O((\kappa+\log n) \sqrt{n\log^7 n} )$ bits. Further, the same number of message bits are used when the referee nodes forward the messages to the candidate nodes. Thus, a total $O((\kappa+\log n) \sqrt{n\log^7 n} )$ bits are used inside the iteration. 

Hence, the total communication complexity of the algorithm is $O((\kappa+\log n) \sqrt{n\log^{7} n})$ bits. Since the length of $\kappa$ is typically to be the maximum size of the messages \cite{MR20}, it is safe to assume $\kappa$ is of order $O(\log n)$ for large $n$. Therefore, the  total number of bits used is: $O(\sqrt{n\log^{9} n})$. Thus, the message complexity of the algorithm is $O(\sqrt{n\log^{7} n})$, since the size of each message is $O(\log n)$\footnote{Alternatively, the communication complexity is $O\left(\sqrt{n\log^{9} n}\right)$ bits.}.  
\end{proof}
\begin{lemma}\label{lem:time}
The time complexity of the algorithm is $O(\log^2 n)$ rounds.   
\end{lemma}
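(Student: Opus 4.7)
The plan is to account for the rounds contributed by each step of Algorithm~\ref{alg:agreement} and observe that the dominant cost comes from the main iterative loop, where the size of the piggybacked signature chains grows with the iteration index. Steps~1 and~2 are purely local computations (using the global coin, the keyed hash, and each candidate's private randomness), so they cost $0$ rounds. In Step~3, each candidate node sends a single input value together with one signature to each of its $O(\sqrt{n\log n})$ referees; the message is of size $O(\kappa + \log n) = O(\log n)$ bits and fits into one CONGEST round.

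Step~4 is the first place where the CONGEST bandwidth becomes a real constraint: a given referee node $w$ could be sampled by up to $|\mathcal{C}| = O(\log n)$ different candidates, and it must relay each of those signed input values onward to its own set of candidate neighbors $\mathcal{C}_w$. Since each relayed packet is $O(\log n)$ bits and at most $O(\log n)$ of them queue up on any edge, this takes $O(\log n)$ rounds. Step~5, in which each candidate forwards the current highest-priority value together with the set of signatures it has collected, is analyzed in the same way and also fits within $O(\log n)$ rounds.

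The main loop (Step~6) runs for $O(\log n)$ iterations, so it would be tempting to conclude $O(\log^2 n)$ rounds immediately by declaring each iteration takes $O(\log n)$ rounds. I would make this rigorous by observing that in the $i^{\text{th}}$ iteration, a candidate (or referee) node transmits a value together with a chain of at least $i$ legitimate signatures, i.e.\ a message of $O(i(\kappa + \log n)) = O(i \log n)$ bits. In the CONGEST model with $O(\log n)$ bits per edge per round, one such iteration therefore takes $O(i)$ rounds per hop, and since each iteration consists of a candidate-to-referee hop followed by a referee-to-candidate hop, the $i^{\text{th}}$ iteration uses $O(i)$ rounds in total. Summing over the $c \log n$ iterations yields
\[
\sum_{i=1}^{c\log n} O(i) \;=\; O(\log^2 n)
\]
rounds for the loop, which dominates the earlier steps and gives the claimed bound.

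The main obstacle I anticipate is being careful about the message size accounting: one must argue that no signature chain ever needs to be longer than the current iteration index (otherwise the bound on $|\mathcal{C}|$ would not suffice), and that referees do not need to forward more than one highest-priority message per iteration, so that the per-edge congestion is truly $O(i \log n)$ bits and not larger. Once this invariant on chain length and message multiplicity is in place, the round count follows by a straightforward summation as above.
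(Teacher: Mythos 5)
Your proof is correct and reaches the paper's bound, but the accounting inside the main loop is organized differently from the paper's. The paper charges each of the $O(\log n)$ iterations a flat $O(\log n)$ rounds, attributing the congestion to message \emph{multiplicity}: a single referee (resp.\ candidate) may serve up to $|\mathcal{C}| = O(\log n)$ candidates and so may have $O(\log n)$ messages queued on an edge, each treated as one $O(\log n)$-bit packet, giving $O(\log n)\cdot O(\log n) = O(\log^2 n)$. You instead attribute the congestion to message \emph{length}: in iteration $i$ the forwarded signature chain has $\Theta(i)$ signatures, so a single message costs $O(i)$ rounds per hop, and you sum $\sum_{i=1}^{c\log n} O(i) = O(\log^2 n)$. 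Your version has the advantage of explicitly tracking the growing chain size, which the paper's proof glosses over (a chain of $i$ signatures of $\kappa = O(\log n)$ bits each is $O(i\log n)$ bits, i.e.\ genuinely more than one CONGEST round, so ``one round per message'' is only safe early in the loop); the paper's version has the advantage of explicitly handling the fact that one referee may serve many candidates. The invariant you flag at the end --- that a referee forwards only a single highest-priority value per iteration, so the two sources of congestion do not compound into $O(i\log n)$ rounds per iteration and hence $O(\log^3 n)$ overall --- is exactly the right thing to verify, and it is supported by the algorithm as written (line~9 forwards one highest-priority value, not every received chain). A fully airtight argument would combine both accountings, but either one alone, under that invariant, yields the claimed $O(\log^2 n)$.
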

\begin{proof}
There are $O(\log n)$ candidate nodes, and each may have $O(\log n)$ messages to be sent to its referee nodes in parallel. Thus, it takes $O(\log^2 n)$ rounds, since it takes one round to send a constant number of messages of size $O(\log n)$ bits. The same time bound holds when the referee nodes forward the messages to the candidate nodes. Therefore, the overall round complexity is $O(\log^2 n)$.
\end{proof}

Thus, we get the following result of the  implicit Byzantine agreement with  authentication. 
\begin{theorem}[Implicit Agreement] \label{thm:main-implicit}
Consider a synchronous, fully-connected network of $n$ nodes and CONGEST communication model. Assuming a public-key infrastructure with the keyed hash function, there exists a randomized algorithm which, with the help of a global coin, solves implicit Byzantine agreement with high probability in $O(\log^2 n)$ rounds and uses $O(n^{0.5} \log^{3.5} n)$ messages while tolerating  $f \le (1/2-\epsilon)n$ Byzantine nodes under non-adaptive adversary, where $\eps$ is any fixed positive constant.   
\end{theorem}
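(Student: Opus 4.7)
The plan is to assemble Theorem~\ref{thm:main-implicit} from the four lemmas just proved, together with the correctness of the adapted Dolev-Strong BA executed on the committee. First I would argue that since the global coin $r$ is sampled \emph{after} the static adversary commits to the Byzantine set, and since $H_r(\cdot)$ behaves like a random oracle on the public keys, the committee $\mathcal{C}$ (the $c\log n$ public keys with smallest hash value) is distributed as a uniformly random size-$c\log n$ subset of the $n$ nodes, up to negligible bias coming from hash collisions. This is the one step I expect to be genuinely subtle: one must explicitly invoke the hash properties and the non-adaptivity of the adversary in order to legitimately use the uniform-sampling Chernoff bound. Once this is in place, Lemma~\ref{lemma:atmostfaulty} gives honest majority in $\mathcal{C}$, and Lemma~\ref{lemma:common-referee}, together with a union bound over the $O(\log^2 n)$ candidate pairs, gives a common honest referee for every pair of candidates, both with probability $\ge 1-1/n^{\Omega(1)}$.

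Next I would show that, conditioned on these two events, the candidates simulate authenticated pairwise channels through the referees. Any message forwarded through a Byzantine referee is either dropped (in which case the guaranteed honest referee covering the same pair still delivers it) or tampered with (in which case the recipient rejects it because the PKI signature fails to verify, by unforgeability). Hence the committee effectively executes the authenticated Dolev-Strong BA on a complete graph of $|\mathcal{C}|=c\log n$ nodes with strictly fewer than $|\mathcal{C}|/2$ Byzantine participants. The protocol terminates with consistency and validity in $|\mathcal{C}|+1$ iterations by the standard signature-chain argument, since iteration $i$ admits only values carrying at least $i$ distinct signatures, so any chain of length $|\mathcal{C}|+1$ must contain at least one honest signer who then relays the value to all honest candidates through their shared honest referees.

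The four implicit-BA properties of Definition~\ref{def:implicitBA} follow immediately. Every honest node outside $\mathcal{C}$ recognises that it is not selected and deterministically sets its state to \emph{undecided}; every honest member of $\mathcal{C}$ is \emph{decided} and, by consistency of the emulated Dolev-Strong run, on the same value, so consistency and termination hold. For validity, if every honest node starts with the common input $v$, then $v$ is the only value backed by $\ge |\mathcal{C}|/2$ honest signatures in Step~0, making it the unique highest-priority value throughout, so no honest candidate ever switches away from $v$. The round bound $O(\log^2 n)$ and message bound $O(n^{0.5}\log^{3.5} n)$ are exactly Lemma~\ref{lem:time} and Lemma~\ref{lem:message}. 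A final union bound over the committee-sampling event, the common-referee event, and a negligible signature-forgery event yields the w.h.p.\ guarantee and completes the theorem.
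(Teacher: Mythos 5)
Your proposal is correct and follows essentially the same route as the paper: select the committee via the coin-keyed hash of the public keys, invoke Lemma~\ref{lemma:atmostfaulty} for honest majority and Lemma~\ref{lemma:common-referee} for reliable inter-candidate communication through referees, run the adapted Dolev--Strong protocol on the committee with non-committee honest nodes going undecided, and cite Lemmas~\ref{lem:message} and~\ref{lem:time} for the complexity bounds. Your write-up is in fact somewhat more careful than the paper's, which only sketches the reduction after the two structural lemmas: you make explicit the uniformity of the committee (via the hash acting as a random oracle on keys chosen after the static adversary commits), the union bound over candidate pairs, and the signature-based emulation of authenticated channels through possibly Byzantine referees.
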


Our implicit agreement algorithm can be easily extended to solve explicit agreement (where all the honest nodes must decide on the same value satisfying the validity condition) in one more round. After the implicit agreement, the committee nodes send the agreed value (along with the signature) to all the nodes in the network in the next round. This incurs $O(n \log n)$ messages. Then all the nodes decide on the majority value since the majority of the nodes in the committee are honest. Thus, the following result of explicit agreement follows immediately.     
\begin{theorem}[Explicit Agreement]\label{thm:explicit}
Consider a synchronous, fully-connected network of $n$ nodes and CONGEST communication model. Assuming a public-key infrastructure with the keyed hash function, there exists a randomized algorithm which, with the help of global coin, solves Byzantine agreement with high probability in $O(\log^2 n)$ rounds and uses $O(n \log n)$ messages while tolerating  $f \le (1/2-\epsilon)n$ Byzantine nodes under non-adaptive adversary, where $\eps$ is any fixed positive constant.   
\end{theorem}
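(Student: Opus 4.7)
The plan is to bootstrap the explicit agreement on top of Theorem~\ref{thm:main-implicit} by appending a single broadcast round after the implicit BA terminates. First, I would invoke Algorithm~\ref{alg:agreement} verbatim, so that after $O(\log^2 n)$ rounds every non-faulty candidate node in $\mathcal{C}$ holds a common agreed value $v^{\star}$. Every node in the network already knows the identities (public keys) of the members of $\mathcal{C}$, because $\mathcal{C}$ is derived from the global coin $r$ and the publicly known keys via the hash function, so no extra discovery messages are required.

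Second, in one additional iteration each candidate node $u \in \mathcal{C}$ signs its decided value and broadcasts $\langle v^{\star}_u, \sigma_u(v^{\star}_u) \rangle$ to all $n$ nodes over its incident edges. Each non-candidate honest node $x$ then collects the signed messages that carry a valid signature from some $u \in \mathcal{C}$, discarding any forgeries (undetectable forgery is ruled out by the PKI and the hash assumption) and discarding any value sent by a node claiming to be in $\mathcal{C}$ but whose public key does not match the hash-based selection. Node $x$ decides on the value supported by a strict majority of distinct valid candidate-signatures, breaking ties by the same priority rule used inside the implicit protocol.

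For correctness, Lemma~\ref{lemma:atmostfaulty} guarantees that strictly more than $|\mathcal{C}|/2$ candidates are honest w.h.p., and by the implicit BA all honest candidates sign and send the same value $v^{\star}$. Hence every honest $x$ observes more than $|\mathcal{C}|/2$ validly signed copies of $v^{\star}$, while Byzantine candidates can contribute at most $|\mathcal{C}|/2 - 1$ signatures for any other value (they cannot forge honest signatures). So the majority rule unambiguously yields $v^{\star}$ at every honest node, giving consistency; validity is inherited because the implicit protocol already satisfies it; termination is immediate since the broadcast step is a fixed single iteration.

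For the complexity bookkeeping, the implicit phase costs $O(n^{0.5}\log^{3.5} n)$ messages and $O(\log^2 n)$ rounds by Lemmas~\ref{lem:message} and~\ref{lem:time}. The added broadcast has at most $|\mathcal{C}| \cdot n = O(n \log n)$ messages of size $O(\log n)$ bits each, and since every candidate sends one message per outgoing edge, the CONGEST round cost is $O(1)$. Summing, the total message complexity is $O(n \log n)$ and the round complexity is $O(\log^2 n)$, matching the theorem. The only subtle step is the last one: arguing that a non-candidate node can always distinguish genuine candidate messages from Byzantine impersonation; this reduces to the unforgeability of the digital signature scheme together with the publicly verifiable mapping from the global coin $r$ and public keys to $\mathcal{C}$, both of which are standard under the stated cryptographic assumptions.
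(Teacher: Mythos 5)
Your proposal is correct and follows essentially the same route as the paper: run the implicit protocol, then have each of the $O(\log n)$ candidate nodes sign and broadcast its decided value to all $n$ nodes, with every node deciding by majority among validly signed candidate messages, which costs $O(n\log n)$ messages and $O(1)$ extra rounds. Your write-up merely fills in the verification and counting details that the paper leaves implicit.
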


\onlyLong{
Let us now discuss some relevant follow-up results.


\subsection{Byzantine Leader Election}\label{sec:le}
A leader can be elected without any communication in this model. Since the public keys are known to all the nodes, the node nearest to the random number generated through the global coin will be the leader. In case of a tie, the node with the largest public key will be the leader. The elected leader is non-faulty with probability at most $(1-f/n)$, where $f$ is the number of faulty nodes. The reason is that the Byzantine nodes can act as an honest node and the probability of electing a Byzantine node as leader is the same as for an honest node. Since there are at most $f \le (1/2-\epsilon)n$ Byzantine nodes, the elected leader is non-faulty with at most constant probability. 

We remark that a leader solves the implicit agreement, as the leader can agree on its own input value. It also solves explicit simply by sending its value to all the nodes. However, the success probability of the Byzantine agreement is at most constant. A Byzantine agreement protocol with a high success probability is important for practical solutions (which is presented in the above section).

\subsection{In the $KT_1$ Model}\label{sec: KT1}
In the $KT_1$ (\textbf{K}nown \textbf{T}ill \textbf{1}) communication model, implicit Byzantine agreement can be solved in $O(\log^2 n)$ rounds using $O(\log^3 n)$ messages in the same settings. First, select the candidate nodes of size $\Theta(\log n)$ as we selected in the $KT_0$ model (see Section~\ref{sec:agreement}). In the $KT_1$ model, each node is aware about the IDs' of its neighbor. Therefore, the candidate nodes know each other and the port connecting to them. In fact, they form a complete graph of size $\Theta(\log n)$ where one node knows the other. Now, we can run the Algorithm~\ref{alg:agreement} on this complete graph of candidate nodes, which essentially solves implicit Byzantine agreement in $O(\log^2 n)$ rounds and uses $O(\log^3 n)$ messages.  

The leader election can be solved in the same way as in Section~\ref{sec:le}; there is no need of any communication. 


\subsection{Removing the Global Coin and Hash Function Assumption}\label{sec:wo-pki}
The assumption on accessing a global coin and hash function can be replaced by some other assumption (which might be stronger in some context). Previously, with the help of global coin a random set of candidate nodes is selected. Recall that the access of the global coin is given to the nodes after the adversary selects the Byzantine nodes (and it is a static adversary). Otherwise, the adversary can know which nodes would be in the candidate set and based on that selects the Byzantine nodes to be in the candidate set. Suppose there is no access of global coin (or shared random bits) and hash function. Then we need to make the adversary first selects the Byzantine nodes, and then the PKI setup is imposed in the network. Since the trusted third party generates a random pair of public-secret keys  $(p_k, s_k)$, the $O(\log n)$ nodes with the smallest public keys can be taken as the candidate nodes. Thus, the randomness in the candidate nodes set is preserved as required.  
}


\section{Lower Bound on Message Complexity}\label{sec: lower_bound}
We argue a lower bound of $\Omega(\sqrt{n})$ on the number of messages required by any algorithm that solves the authenticated Byzantine agreement under honest majority with high probability. Recall that all the nodes have access to an unbiased global coin. The nodes know the IDs of the other nodes, but are unaware of the port connecting to the IDs. Also, the communication is authenticated. 

Our {\sc Authenticated-Implicit-BA}  algorithm solves multi-valued agreement in polylogarithmic rounds and uses $\Tilde{O}(\sqrt{n})$ messages (see, Theorem~\ref{thm:main-implicit}). In a non-Byzantine setting, the multi-valued agreement can be used to elect a leader by using the IDs of the nodes as the input values. Thus, any lower bound on the message complexity (and also on the time complexity) of the leader election problem also applies to the multi-valued agreement. Therefore, the $\Omega(\sqrt{n})$ lower bound shown by \cite{AMP18} for the leader election problem using global coin (in the non-Byzantine setting) also applies to our multi-valued Byzantine agreement with global coin. Note that the lower bound holds because of the high success probability requirement of the agreement; otherwise, the leader election \onlyLong{algorithm discussed in Section~\ref{sec:le}} solves the agreement with zero message cost, but with only constant success probability. However, the above argument does not hold for the binary agreement, where the input values are either $0$ or $1$. Below, we argue for the binary case.   

Let $A$ be an algorithm that solves the authenticated Byzantine (binary) agreement with constant probability (say, more than $1/2$) under an honest majority with the access of an unbiased global coin and uses only $o(\sqrt{n})$ messages. We show a contradiction. Recall that it is a $KT_0$ model; so nodes do not know which edge connects to which node-ID. To achieve agreement with constant probability, nodes must communicate with the other nodes; otherwise, if the nodes try to agree locally without any communication, it is likely that there exist two nodes that agree on two different values (this can be easily shown probabilistically). On the other hand, if all the nodes try to communicate, then the message complexity of $A$ would be $\Omega(n)$. So, only a few nodes need to initiate the process. Thus, $A$ must pick these few initiator nodes randomly; otherwise, if picked deterministically, the Byzantine nodes can take over the initiator nodes. The initiator nodes must communicate with the nodes in the network to achieve agreement. 

The initiator nodes do not know each other. In the $KT_0$ setting, for any two nodes to find each other with more than $1/2$ probability requires $\Omega(\sqrt{n})$ messages-- inferred from the following lemmas. 
\begin{lemma}\label{lem:know-each-other}
The $KT_0$ model takes $\Omega(\sqrt{n})$ messages for any two nodes to find each other with more than constant probability.
\end{lemma}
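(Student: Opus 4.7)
My plan is a birthday-paradox-style argument that exploits the adversary's freedom to permute ports in the $KT_0$ model. Since initially no node has any information about its port-to-ID mapping, I may assume without loss of generality that at each node the bijection between its $n-1$ ports and the other node IDs is drawn as a uniform random permutation, independently across nodes and independently of the algorithm's coins (including the global coin). A success probability of at least a constant on every instance implies the same bound in expectation over this port assignment, so it suffices to lower bound the algorithm's failure probability in this randomized-port model.

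Suppose, for contradiction, that the protocol terminates after sending $m = o(\sqrt{n})$ messages in total on good runs. For each of the two target nodes, track a ``footprint'' set: $F_u$ is the set of nodes that have at some point sent or received a message whose causal history traces back to $u$, and $F_v$ is defined symmetrically. Every message can add at most one new node to $F_u \cup F_v$, so $|F_u| + |F_v| \le m + 2 = o(\sqrt{n})$. For $u$ and $v$ to end up knowing which ports at their interfaces lead to each other, there must (with constant probability) be some causal chain of messages connecting them, which forces at least one of the three events to occur: $v \in F_u$, or $u \in F_v$, or $F_u \cap F_v \neq \emptyset$ (allowing a common intermediary to couple the two footprints).

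The heart of the argument is bounding each of these events. Under the randomized port model, whenever a node inside $F_u$ sends on a port it has not previously probed, the destination is (conditioned on history) uniform over its still-unrevealed neighbors, so by a principle-of-deferred-decisions coupling, $F_u$ can be dominated by a uniform random subset of $V\setminus\{u\}$ of size $|F_u|$; the same holds for $F_v$, and the two couplings can be made independent up to the first coincidence. This gives
\[
\Pr[v \in F_u] \le \tfrac{|F_u|}{n-1}, \quad \Pr[u \in F_v] \le \tfrac{|F_v|}{n-1}, \quad \Pr[F_u \cap F_v \ne \emptyset] \le \tfrac{|F_u|\cdot|F_v|}{n-1}.
\]
Using $|F_u|,|F_v| \le m+1$ and $|F_u|\cdot|F_v| \le (m+2)^2/4$, a union bound makes the probability of any of the three events $O(m/n) + O(m^2/n) = o(1)$, contradicting the hypothesis that $u$ and $v$ find each other with constant probability.

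The main obstacle I anticipate is making the deferred-decisions coupling fully rigorous in the presence of adaptive, message-driven exploration: a node may choose its next port to probe based on replies it has received, so the marginal distribution of destinations is not literally uniform. I plan to resolve this by revealing port endpoints only on first use and arguing that at every step the conditional collision probability is at most $1/(n-|F_u \cup F_v|) = (1+o(1))/n$, so the overall collision contribution telescopes to $O(m^2/n)$; combined with the direct-hit terms this gives the desired $o(1)$ bound whenever $m = o(\sqrt{n})$.
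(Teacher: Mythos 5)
Your proposal is correct, and its核 core quantitative step is the same one the paper uses: a birthday-paradox bound showing that two sets of $o(\sqrt{n})$ randomly located nodes intersect with probability $o(1)$ (the paper computes $1-\bigl(1-\tfrac{f(n)}{n}\bigr)^{f(n)}$ directly). The difference is in how much of the argument is actually carried out. The paper's proof simply \emph{posits} that $x$ and $y$ each ``sample $f(n)$ nodes uniformly at random'' and then evaluates the collision probability; it does not justify why an arbitrary (possibly adaptive) algorithm in $KT_0$ cannot do better than blind uniform sampling, nor does it account for messages sent by intermediaries or for a direct hit on the other target. Your version supplies exactly these missing pieces: the reduction to a uniformly random port assignment (so that a Yao-style averaging argument applies), the footprint/communication-graph bookkeeping showing $|F_u|+|F_v|\le m+2$, the three-event decomposition ($v\in F_u$, $u\in F_v$, $F_u\cap F_v\neq\emptyset$), and the principle-of-deferred-decisions coupling that makes each newly probed port uniform over unrevealed neighbors even under adaptive exploration. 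This is essentially the machinery of the leader-election lower bound of Augustine et al.\ and Kutten et al.\ that the paper cites later in Section~\ref{sec: lower_bound} for the disjoint-communication-graph claim, so your proof is both more general (it covers all algorithms, not one strategy) and better aligned with how the lemma is actually used downstream; the paper's proof buys brevity at the cost of rigor. One small point to keep clean when you write it up: port permutations at distinct nodes of a complete graph can indeed be taken independent, but you should say explicitly that the underlying edge set is fixed and only the local labelings are randomized, so the ``independence across nodes'' does not conflict with the symmetry of the adjacency relation.
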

\begin{proof}
Suppose $x$ and $y$ be the two nodes. Both $x$ and $y$ samples $f(n)$ nodes uniformly at random from all the nodes to find out each other. Let $E$  be the event of having a collision in the random samples. Then, for $f(n) < \sqrt{n}$,
\begin{equation} \label{initiator bound equation}
\begin{split}
\Pr[E] = 1- \biggl(1- \frac{f(n)}{n}\biggr)^{f(n)} & = 1- e^{\frac{-(f(n))^2}{n}}  < 1- 1/e = 0.63
\end{split}
\end{equation}
Therefore, $x$ and $y$ cannot find out each other with more than constant probability with $o(\sqrt{n})$ messages. 
\end{proof} 
Each initiator node samples some nodes. The initiator and the sampled nodes may exchange messages with the other nodes in the network throughout the execution of $A$-- all these nodes form a connected sub-graph. Let us call this sub-graph as {\em communication graph} of the initiator node. Thus, for every initiator node, there is a communication graph. Some of them may merge and form a single communication graph. However, it is shown in \cite{AMP18} that if the algorithm $A$ sends only $o(\sqrt{n})$ messages, then there exist two disjoint communication graphs w.h.p. (see, Section~2 of \cite{AMP18}). Since the nodes do not know each other (in $KT_0$), the communication graphs have similar information. The global coin also gives the same information to all the nodes. Since the communication graphs are disjoint, no information is exchanged between them. A formal proof of this argument can be found in \cite{KPPRT15, AMP18}.

Let $H_u$ and $H_v$ be the two disjoint communication graphs corresponding to the two initiators $u$ and $v$ respectively.  We show that $H_u$ and $H_v$ agree with opposite decisions, i.e., if $H_u$ decides on $0$ then $H_v$ decides on $1$ and vice versa.   
\begin{lemma}\label{lem:opposite-decision}
The nodes of $H_u$ and $H_v$ agree with opposing decisions. 
\end{lemma}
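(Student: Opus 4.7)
The plan is to derive a contradiction with the consistency property of Byzantine agreement via an indistinguishability/hybrid argument. Concretely, I will exhibit an input assignment and an adversarial strategy under which the nodes of $H_u$ are forced (by validity) to decide $0$ while the nodes of $H_v$ are forced to decide $1$, which directly shows the claim.

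First, I will fix the random tape (both the global coin and each node's private coins) and the sampling outcomes to lie in the high-probability event that $H_u$ and $H_v$ are disjoint communication graphs, as guaranteed by Lemma~\ref{lem:know-each-other} and the preceding discussion. Conditioning on this event, in the actual execution of $A$ no message is ever sent between a node in $H_u$ and a node in $H_v$: every message received by a node $x\in H_u$ originates from another node of $H_u$, and symmetrically for $H_v$. Hence the complete local view (inputs, coins, received messages with their signatures) of each node $x\in H_u$ is determined entirely by the inputs and coins of the nodes of $H_u$, together with the global coin; and likewise for $H_v$.

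Next, consider two reference executions. In execution $E_0$, every node in the network is honest and holds input $0$; by validity, every honest node, and in particular every node of $H_u$, decides $0$. In execution $E_1$, every node is honest and holds input $1$; by validity, every node of $H_v$ decides $1$. Now construct a hybrid execution $E^*$: the nodes in $H_u$ have input $0$, the nodes in $H_v$ have input $1$, and the remaining nodes are taken to be Byzantine and are instructed by the adversary to behave, towards $H_u$, exactly as the corresponding honest nodes behaved in $E_0$, and, towards $H_v$, exactly as the corresponding honest nodes behaved in $E_1$. Because the random tapes are identical and signatures are produced only by nodes in $H_u\cup H_v$ (who are honest in all three executions, so their signatures are genuine in $E^*$ too), the view of every node of $H_u$ in $E^*$ is bit-for-bit the same as in $E_0$, and the view of every node of $H_v$ in $E^*$ is bit-for-bit the same as in $E_1$. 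Consequently, in $E^*$ the nodes of $H_u$ must decide $0$ and the nodes of $H_v$ must decide $1$, and they are all honest, so the two communication graphs agree on opposing values, as claimed.

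The main subtlety I expect to have to justify carefully is feasibility of the Byzantine strategy in $E^*$ under the honest majority constraint: the adversary corrupts every node outside $H_u\cup H_v$, so we need $|H_u\cup H_v| > n/2$. This is where the $o(\sqrt n)$ assumption is crucial; one must argue (as in the preceding paragraphs and in \cite{AMP18}) that when $A$ sends only $o(\sqrt n)$ messages the communication graphs are small and one may instead reason about $H_u$ and $H_v$ obtained after augmenting with sufficiently many uncontacted honest nodes (whose local transcripts are trivial and therefore do not affect the indistinguishability above), thereby ensuring the honest majority in $E^*$ while preserving the fact that no cross-messages are exchanged. Once this technicality is handled, the argument closes and the contradiction with consistency yields $f(n)\in\Omega(\sqrt n)$.
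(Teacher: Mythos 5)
Your proof takes a genuinely different route from the paper's. The paper does not use validity or an all-honest reference execution at all: it keeps the adversary's budget at $(1/2-\eps)n$ globally, observes that because $H_u$ and $H_v$ consist of randomly sampled nodes each of them contains a near-half fraction of Byzantine nodes \emph{internally}, places a uniformly random $0/1$ input on the honest nodes (so validity constrains nothing), and then argues that the near-half Byzantine minority inside each isolated group can steer that group's decision --- to $0$ in $H_u$ and to $1$ in $H_v$. Your argument is instead the classical Dolev--Reischuk-style hybrid: simulate the all-$0$ honest execution towards $H_u$ and the all-$1$ honest execution towards $H_v$ and invoke validity twice.

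The difficulty you flag at the end is not a technicality; it is the crux, and as written it is a genuine gap. Your hybrid $E^*$ corrupts every node outside $H_u\cup H_v$, i.e.\ $n-o(\sqrt n)$ nodes, which violates the honest-majority hypothesis under which the lemma and Theorem~\ref{thm: lower_bound} are stated. The proposed repair --- augmenting $H_u$ and $H_v$ with uncontacted honest nodes --- does not obviously close it, for two reasons. First, any node that could plausibly interact with both sides must be two-faced (behave as in $E_0$ towards $H_u$ and as in $E_1$ towards $H_v$), and only a corrupted node can do that; an honest node must be given a single input, and its sending behaviour (including whether it contacts $H_u$ or $H_v$ at all) depends on that input, so ``its transcript is trivial'' is not input-independent and must itself be proved by a round-by-round induction in all three executions simultaneously. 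Second, the communication graphs and their disjointness are properties of a \emph{particular} execution; the component of $u$ in $E_0$ and the component of $v$ in $E_1$ need not coincide with the $H_u,H_v$ obtained from the execution in which disjointness was established, so the $o(\sqrt n)$/birthday-paradox disjointness argument has to be redone across $E_0$, $E_1$ and $E^*$. The paper's in-group-adversary formulation sidesteps both issues (at the cost of a less formal ``the Byzantine nodes can control the output'' step); if you want to keep your validity-based hybrid, you must either restate the lemma without the honest-majority restriction or carry out the augmentation argument explicitly.
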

\begin{proof}
 The nodes in $H_u$ and $H_v$ are random since the network is anonymous and no information is known before contacting a node. Thus, the same $(1/2 -\eps)$ fraction of Byzantine nodes (as in the original graph $G$) are present in both $H_u$ and $H_v$ in expectation. Now consider an input distribution $\I$, in which each node in the graph $G$ is given an input value $0$ and $1$ with probability $1/2$. Then in expectation, half of the honest nodes in $H_u$ (and also in $H_v$) gets $0$ 
 and the other half gets $1$. 
 We argue that the two disjoint sets of nodes in $H_u$ and $H_v$ decide on two different input values under the input distribution $\I$. Recall that the nodes in $H_u$  have the same information as the nodes in $H_v$. Further, they run the same algorithm $A$. The Byzantine adversary (which controls the Byzantine nodes) knows the algorithm and also the input values of the honest nodes in $H_u$ and $H_v$. Since the number of Byzantine nodes in each of the $H_u$ and $H_v$ is almost half of their size, and $0-1$ distribution among the honest nodes is almost 50-50, the Byzantine nodes can control the output of the agreement in the two groups--- $H_u$ and $H_v$. 
 Suppose the Byzantine nodes in $H_u$ exchange some input bits with honest nodes to agree on $0$ in $H_u$, then the Byzantine nodes in $H_v$ must exchange opposite bits to agree on $1$ in $H_v$. Thus, $H_u$ and $H_v$ agree with opposing decisions. 
\end{proof}
The above lemma contradicts the assumption that $A$ solves the Byzantine (binary) agreement with only $o(\sqrt{n})$ messages. The global coin does not help, as it gives the same information to all the nodes. Thus, we get the following result on the lower bound of the message complexity.
\begin{theorem}\label{thm: lower_bound}
Consider any algorithm $A$ that has access to an unbiased global coin and sends at most $f(n)$ messages (of arbitrary size) with high probability on a complete network of $n$ nodes. If $A$ solves the authenticated Byzantine agreement under honest majority with more than $1/2$ probability, then $f(n) \in \Omega(\sqrt{n})$. 
\end{theorem}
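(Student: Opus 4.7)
The plan is to split the argument into the multi-valued and binary cases, since the multi-valued case reduces cleanly to an existing lower bound while the binary case requires a dedicated adversary argument. For the multi-valued case, I would observe that any protocol solving multi-valued Byzantine agreement with high probability in a non-Byzantine execution solves leader election (use the node IDs as the input values and agree on one of them). Thus the $\Omega(\sqrt{n})$ lower bound of Augustine-Molla-Pandurangan for global-coin leader election transfers immediately, and no further work is needed.

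For the binary case, my plan is a symmetry-breaking argument. First I would rule out deterministic designation of initiators: if the set of nodes that ever send a message is fixed in advance and has size $o(n)$, a static adversary can corrupt the whole set; so either $\Omega(n)$ nodes must send messages, or the initiator set is chosen with enough randomness that the adversary cannot target it in advance. In the latter case, let $u$ and $v$ be two honest initiators. Because the model is $KT_0$ and the only information available beyond the IDs is the shared coin, each initiator discovers other participants only by actually contacting sampled ports. A birthday-paradox computation, essentially the one already written as Lemma \ref{lem:know-each-other}, shows that with fewer than $\sqrt{n}$ samples the probability that two independent samples collide is bounded away from $1$, so with positive constant probability the communication subgraphs $H_u$ and $H_v$ rooted at $u$ and $v$ remain vertex-disjoint throughout the execution.

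Next I would argue indistinguishability: since the nodes in $H_u$ and $H_v$ see only IDs and shared coin bits (identical for both components) plus messages internal to their own component, the joint view of $H_u$ is identically distributed to a suitably relabeled joint view of $H_v$. Now I would construct the adversary. Under the $0$/$1$-uniform input distribution $\I$, roughly half the honest nodes in each component get $0$ and half get $1$, and the honest-majority bound still leaves almost $(1/2-\eps)|H_u|$ corruptions inside each component, which is enough to tip the majority either way. The adversary simulates two executions of $A$ on the two components by injecting opposite bit patterns on its corrupted nodes in $H_u$ versus $H_v$, forcing the honest nodes in $H_u$ to decide $0$ and those in $H_v$ to decide $1$ (or vice versa). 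This violates consistency, contradicting that $A$ is correct with constant probability greater than $1/2$.

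The main obstacle, and the step I would spend the most care on, is the rigorous justification of the disjointness claim: a priori, $A$ could be adaptive and concentrate communication bandwidth in a way that probabilistically forces the two components to merge even when each initiator individually samples few ports. I would handle this by carefully defining a communication graph that grows one message at a time, and invoking the argument of Kutten-Pandurangan-Peleg-Robinson-Trehan and Augustine-Molla-Pandurangan showing that with $o(\sqrt{n})$ total messages, at least two of the growing components remain disjoint with constant probability, regardless of how $A$ adaptively schedules who talks to whom; the global coin does not help because its value is already part of the common view of both components.
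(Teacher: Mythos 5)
Your proposal follows essentially the same route as the paper: the multi-valued case is reduced to the Augustine--Molla--Pandurangan leader-election lower bound, and the binary case uses random initiators, the birthday-paradox bound of Lemma~\ref{lem:know-each-other}, the existence of two disjoint communication graphs under $o(\sqrt{n})$ messages, and an adversary that exploits the indistinguishable views and the near-half corruption budget in each component to force opposing decisions, exactly as in Lemma~\ref{lem:opposite-decision}. The argument is correct at the same level of rigor as the paper's own proof.
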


Note that the lower bound holds for the algorithms in the {\em LOCAL} model, where there is no restriction on the size of a message that can be sent through edges per round \cite{Pelege2000}.  

\vspace{-0.3cm}
\section{Conclusion and Future Work}\label{sec:conclusion}
\vspace{-0.2cm}
We studied one of the fundamental problem in distributed networks, namely Byzantine agreement. We showed that implicit Byzantine agreement can be solved with sublinear message complexity in the honest majority setting with the help of cryptographic set up of PKI and hash function, and access to a global coin. The bound is also optimal up to a $\polylog n$ factor. To the best of our knowledge, this is the first sublinear message bound result on Byzantine agreement. \onlyLong{We also implemented our algorithm to show its efficiency w.r.t different sizes of the Byzantine nodes. We further analyzed some relevant results which immediately follow from our main result. We also studied subset agreement, a generalization of the implicit agreement.}  



A couple of interesting open problems are: (i) is it possible to achieve a sublinear message complexity Byzantine agreement algorithm without the global coin or hash function in this setting? (ii) whether a sublinear message bound is possible under adaptive adversary, which can take over the Byzantine nodes at any time during the execution of the algorithm?  

\bibliographystyle{plainurl}
\bibliography{DCS,security}
\end{document}